  \providecommand\BibTeX{{%
    \normalfont B\kern-0.5em{\scshape i\kern-0.25em b}\kern-0.8em\TeX}}}
\titlespacing\section{0pt}{5pt plus 3pt minus 2pt}{5pt plus 3pt minus 2pt}
\titlespacing\subsection{0pt}{5pt plus 3pt minus 2pt}{5pt plus 3pt minus 2pt}
\definecolor{listinggray}{gray}{0.9}
\definecolor{lbcolor}{rgb}{0.9,0.9,0.9}
\theoremstyle{plain}
\newtheorem{myTheorem}{Theorem}[section]
\definecolor{boxclr}{gray}{0.9}
\newcommand{\thickhline}{%
    \noalign {\ifnum 0=`}\fi \hrule height 0.8pt
    \futurelet \reserved@a \@xhline
}
\newcolumntype{"}{@{\vrule width 0.8pt}}
\newcolumntype{[}{@{\vrule width 0.8pt\hskip\tabcolsep}}
\newcolumntype{]}{@{\hskip\tabcolsep\vrule width 0.8pt}}
\newcolumntype{!}{@{\hskip\tabcolsep\vrule width 0.8pt\hskip\tabcolsep}}
\newcommand{\cppsnippet}[1]{%
  \begin{lstlisting}[gobble=4]
    #1
  \end{lstlisting}
}
\tikzset{
    position/.style args={#1:#2 from #3}{
        at=(#3.#1), anchor=#1+180, shift=(#1:#2)
    }
}
\tikzset{
  half fill/.style 2 args={fill=#2, path picture={
    \fill[#1, sharp corners] (path picture bounding box.west) --
                         (path picture bounding box.east) --
                         (path picture bounding box.south east) --
                         (path picture bounding box.south west) -- cycle;}},
}
\tikzset{
  nil fill/.style 2 args={fill=#2, path picture={
    \fill[#1, sharp corners] (path picture bounding box.155) --
                         (path picture bounding box.25) --
                         (path picture bounding box.south east) --
                         (path picture bounding box.south west) -- cycle;}},
}
\tikzset{
  almost fill/.style 2 args={fill=#2, path picture={
    \fill[#1, sharp corners] (path picture bounding box.205) --
                         (path picture bounding box.335) --
                         (path picture bounding box.south east) --
                         (path picture bounding box.south west) -- cycle;}},
}
\definecolor{oldcolor}{HTML}{c66541}
\newcommand*\circled[1]{\tikz[baseline=(char.base)]{
            \node[shape=circle,draw,inner sep=0.5pt] (char) {\small #1};}}
\theoremstyle{nonumberplain}
\definecolor{burntorange}{rgb}{0.8, 0.33, 0.0}
\newcommand{\cut}[1]{}
\newcommand{\paragraphb}[1]{\vspace{0.075in}\noindent{\bf #1.}}
\colorlet{soulgreen}{green!30}
\colorlet{soulred}{blue!20}
\colorlet{soulpeach}{red!20}
\definecolor{darkgreen}{rgb}{0.1, 0.14, 0.13}
\tikzset{ 
table/.style={
  matrix of nodes,
  row sep=-\pgflinewidth,
  column sep=-\pgflinewidth,
  nodes={rectangle,thick,draw=black,text width={},align=center,font=\small},
  text depth=0.25ex,
  text height=1.25ex,
  nodes in empty cells
},
map/.style={
  matrix of nodes,
  row sep=-\pgflinewidth,
  column sep=-\pgflinewidth,
  nodes={rectangle,draw=black,text width=5em,align=center,font=\small},
  text depth=0.25ex,
  text height=1.25ex,
  nodes in empty cells
},
bigmap/.style={
  matrix of nodes,
  row sep=-\pgflinewidth,
  column sep=-\pgflinewidth,
  nodes={rectangle,draw=black,text width=26em,align=center,font=\small},
  text depth=0.25ex,
  text height=1.25ex,
  nodes in empty cells
},
memcell/.style={
  draw, 
  very thick, 
  text width=0.25em, 
  text height=0.25em
},
}
\tikzstyle{startstop} = [rectangle, rounded corners, minimum width=3em, minimum height=1em,text centered, draw=black, fill=red!30]
\tikzstyle{io} = [trapezium, trapezium left angle=70, trapezium right angle=120, minimum width=2.5em, minimum height=1em, text centered, draw=black, fill=blue!30]
\tikzstyle{process} = [rectangle, minimum width=1.5em, minimum height=1em, align=center, draw=black, fill=gray!30]
\tikzstyle{decision} = [diamond, minimum width=3em, minimum height=1em, align=center, draw=black, fill=SkyBlue!30]
\tikzstyle{arrow} = [thick,->,>=stealth]
\tikzstyle{monolog} = [fill=SkyBlue!30]
\tikzset{
  on each segment/.style={
    decorate,
    decoration={
      show path construction,
      moveto code={},
      lineto code={
        \path [#1]
        (\tikzinputsegmentfirst) -- (\tikzinputsegmentlast);
      },
      curveto code={
        \path [#1] (\tikzinputsegmentfirst)
        .. controls
        (\tikzinputsegmentsupporta) and (\tikzinputsegmentsupportb)
        ..
        (\tikzinputsegmentlast);
      },
      closepath code={
        \path [#1]
        (\tikzinputsegmentfirst) -- (\tikzinputsegmentlast);
      },
    },
  },
  mid arrow/.style={postaction={decorate,decoration={
        markings,
        mark=at position .5 with {\arrow[#1]{stealth}}
      }}},
}
\newcommand{\cmark}{\color{green}\ding{51}}%
\newcommand{\xmark}{\color{red}\ding{55}}%
\algnewcommand{\IIf}[1]{\State\algorithmicif\ #1\ \algorithmicthen}%
\algnewcommand{\EndIIf}{\unskip\ }%
\algnewcommand\algorithmicforeach{\textbf{for each}}%
\newcommand{\code}[1]{{\fontsize{9.5}{11}\selectfont\texttt{#1}}}
\def\ie{{i.e.}}
\def\eg{{\em e.g.}\xspace}
\def\etc{etc.}
\def\namex{\textsc{mind}}
\def\name{\namex\xspace}
\def\mmm{logic and metadata for memory management\xspace}
\def\algo{bounded splitting\xspace}
\def\Algo{Bounded Splitting\xspace}
\def\sizing{\Algo}
\def\fullname{\textbf{M}MU \textbf{I}n-\textbf{N}etwork for \textbf{D}isaggregated architectures}
\begin{document}\sloppy
\title{\name: In-Network Memory Management for Disaggregated Data Centers}

\author{Seung-seob Lee}
\affiliation{%
   \institution{Yale University}
   \country{}}
\author{Yanpeng Yu}
\affiliation{%
   \institution{Peking University}
   \country{}}
\author{Yupeng Tang}
\affiliation{%
  \institution{Yale University}
\country{}}
\author{Anurag Khandelwal}
\affiliation{%
  \institution{Yale University}
\country{}}
\author{Lin Zhong}
\affiliation{%
  \institution{Yale University}
\country{}}
\author{Abhishek Bhattacharjee}
\affiliation{%
  \institution{Yale University}
\country{}}

\begin{abstract}
Memory disaggregation promises transparent elasticity, high resource utilization and hardware heterogeneity in data centers by physically separating memory and compute into network-attached resource ``blades''. However, existing designs achieve performance at the cost of resource elasticity, restricting memory sharing to a single compute blade to avoid costly memory coherence traffic over the network. 

In this work, we show that emerging programmable network switches can enable an efficient shared memory abstraction for disaggregated architectures by placing memory management logic \textit{in the network fabric}. We find that centralizing memory management in the network permits bandwidth and latency-efficient realization of in-network cache coherence protocols, while programmable switch ASICs support other memory management logic at line-rate. We realize these insights into \name\footnote{\fullname}, an in-network memory management unit for rack-scale disaggregation. \name enables transparent resource elasticity while matching the performance of prior memory disaggregation proposals for real-world workloads.
\end{abstract}

\begin{CCSXML}
<ccs2012>
  <concept>
    <concept_id>10010520.10010521.10010537.10003100</concept_id>
    <concept_desc>Computer systems organization~Cloud computing</concept_desc>
    <concept_significance>500</concept_significance>
  </concept>
  <concept>
    <concept_id>10003033.10003099.10003102</concept_id>
    <concept_desc>Networks~Programmable networks</concept_desc>
    <concept_significance>500</concept_significance>
  </concept>
</ccs2012>
\end{CCSXML}

\ccsdesc[500]{Computer systems organization~Cloud computing}
\ccsdesc[500]{Networks~Programmable networks}

\keywords{Memory disaggregation, Programmable networks}

\maketitle
\pagestyle{plain}

\section{Introduction}
\label{sec:intro}

Data center network bandwidth is approaching that of intra-server resource interconnects~\cite{terabitethernet, remotememory}, and is soon poised to surpass it~\cite{legoosatc}. This has driven significant academic~\cite{memdisagg2, memdisagg3, memdisagg4, memdisagg5, memdisagg6, memdisagg1, legoos, infiniswap, fastswap, disagg, disaggfault} and industry~\cite{industry0, industry1, industry2, industry3, industry4, industry5} interest in memory disaggregation, where compute and memory are physically separated into network-attached \textit{resource blades}, drastically improving resource utilization, hardware heterogeneity, resource elasticity and failure handling compared to traditional data center architectures. 

However, memory disaggregation is challenging due to three requirements. First,  access to remote memory must have low latency and high throughput --- prior work~\cite{legoos, infiniswap, fastswap, disagg} have targeted $10~\mu$s latency and $100$~Gbps bandwidth per compute blade to minimize application performance degradation. Second,  both memory and compute resources available to applications must scale elastically, in keeping with the promise of disaggregation. Finally, wide adoption and immediate deployment requires support for unmodified applications.

Despite years of research towards enabling memory disaggregation, none of the known approaches support all three requirements simultaneously (\S\ref{ssec:challenges}). Most approaches require application modifications due to changes in hardware~\cite{industry1, industry2, nwsupport, memdisagg4}, programming model~\cite{piccolo, grappa}, or memory interface~\cite{farm, ramcloud, herd}. Recent approaches that enable transparent access to disaggregated memory~\cite{legoos, infiniswap, fastswap} limit application compute elasticity --- processes are limited to compute resources on a single compute blade to avoid cache coherence traffic over the network due to performance concerns.

We present \name, the first memory management system for rack-scale memory disaggregation that simultaneously meets all three requirements for disaggregated memory. Our key idea is to place the \mmm \textit{in the network fabric}. \name's design builds on the observation that the network fabric in the disaggregated memory architecture is essentially a CPU-memory interconnect. In \name, centrally-placed in-network processing devices like programmable network switches~\cite{progswitch1, progswitch2, progswitch3} therefore assume the role of the MMU to enable a high-performance shared memory abstraction. Since \name realizes the \mmm in programmable hardware at line rate~\cite{progswitch1}, latency and bandwidth overheads are minimal. 

Realizing in-network memory management, however, requires working with the unique constraints imposed by programmable switch ASICs. First, today's switch ASICs only have a few megabytes of on-chip memory, making it challenging to store traditional page tables for potentially terabytes of disaggregated memory. Second, switch ASICs only permit a few cycles of limited computations per packet to ensure line-rate processing, while cache coherence may require complex state transition logic for each cached block. Finally, these ASICs~\cite{p4paper} have staged packet processing pipelines where compute and memory resources are spread across multiple physically decoupled match-action stages, introducing interesting challenges in partitioning and placing the \mmm across them. 

To meet the three requirements of memory disaggregation, \name effectively navigates the above constraints and explores the capabilities of today's programmable switches to enable in-network memory management for disaggregated architectures. It does so through a principled redesign of traditional memory management:
\begin{itemize}[leftmargin=*, itemsep=0pt]
  \item \name employs a \textit{global virtual address space} shared by all processes, range partitioned across memory blades to minimize the number of address translation entries that need to be stored in the on-chip memory of switch ASIC. At the same time, it employs a physical memory allocation mechanism that load balances allocations across memory blades for high memory throughput (\S\ref{subsec:addr_trans}).
  \item \name features domain-based memory protection inspired by capability-based schemes~\cite{capabilityaddr, cap, opal} that enables fine-grained and flexible protection by decoupling the storage of memory permissions from address translation entries. Interestingly, such a decoupling actually \textit{reduces} the on-chip memory overheads at the switch ASIC (\S\ref{subsec:mem_prot}).
  \item \name adapts directory-based MSI coherence~\cite{msi} to the in-network setting. To mitigate the network overheads of cache coherence, \name exploits network-centric hardware primitives such as multicast in the switch ASIC to efficiently realize its coherence protocol (\S\ref{ssec:caching}).
  \item We find that the limited on-chip memory at the switch ASIC forces the cache directory to track memory regions at coarse granularities, which in turn results in performance degradation due to \textit{false invalidations} of pages in those regions (\S\ref{ssec:caching}). We address this through a novel \sizing algorithm (\S\ref{sec:algorithm}) that dynamically sizes memory regions to bound both the switch storage requirements as well as performance overheads due to false invalidations.
\end{itemize}
\noindent
We realize \name design on a disaggregated cluster emulated using traditional servers connected by a programmable switch. Our results show that \name enables transparent resource elasticity for real-world workloads while matching the performance for prior memory disaggregation proposals (\S\ref{sec:evaluation}). 

We also find that while \name is competitive with compared systems, workloads with high read-write contention experience sub-linear scaling with more threads due to limitations of current hardware. Current x86 architectures preclude realization of relaxed consistency models commonly employed in shared memory systems~\cite{gam}, and the switch TCAM capacity is close to saturated with cache directory entries for such workloads. We discuss approaches that could enable better scaling with future improvements in switch ASIC and compute blade architectures in \S\ref{sec:discussion}.

\section{Background}
\label{sec:background}

This section motivates \name. We discuss key enabling technologies (\S\ref{ssec:assumptions}), followed by challenges in realizing memory disaggregation goals using existing designs (\S\ref{ssec:challenges}).

\paragraphb{Assumptions} We focus on memory disaggregation at the \textit{rack-scale}, where memory and compute blades are connected by a single programmable switch. Similar to prior work~\cite{memdisagg2, memdisagg3, memdisagg4, memdisagg5, memdisagg6, memdisagg1, legoos, disagg}, we restrict our scope to \textit{partial} memory disaggregation: while most of the memory is network-attached, compute blades possess a small amount (few GBs) of local DRAM as cache.

\begin{figure}[t]
  \centering
  \includegraphics[width=0.44\columnwidth]{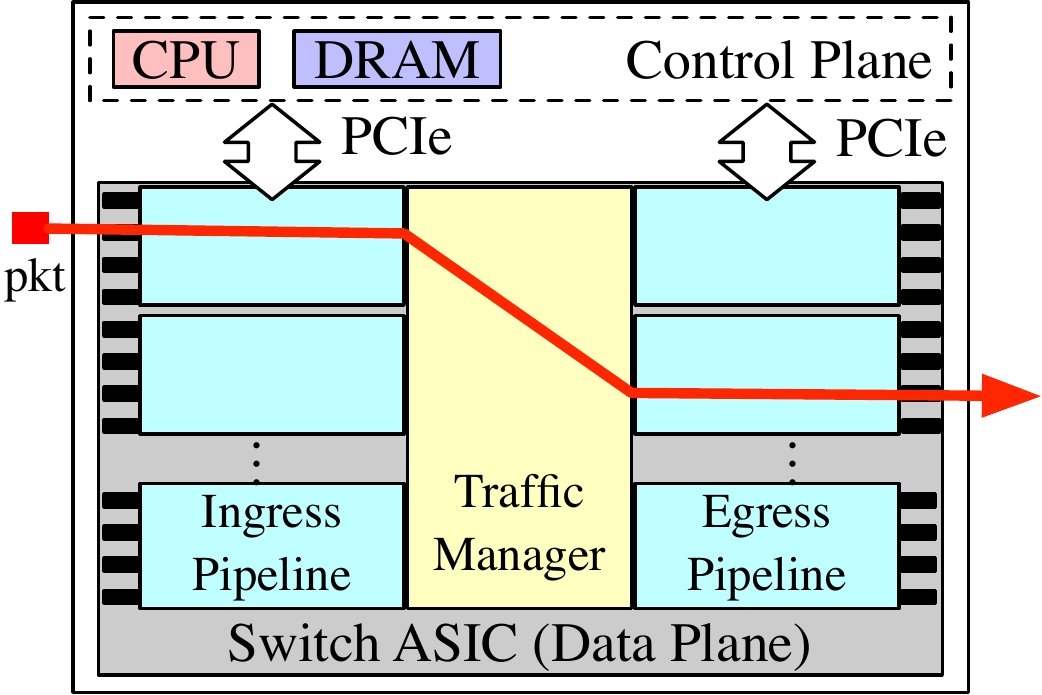}\hfill
  \includegraphics[width=0.54\columnwidth]{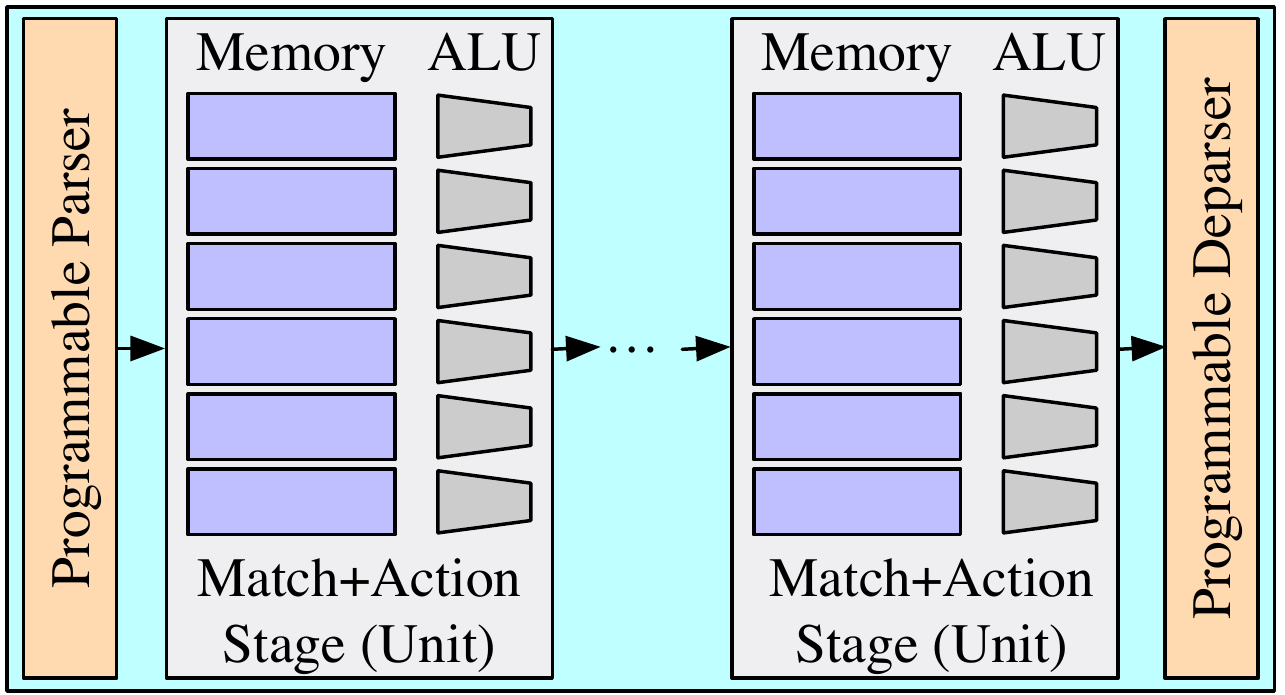}
  \caption{\textbf{Enabling technologies for \name.} (left) Programmable switch architecture and (right) Switch ingress/egress pipeline.}
  \label{fig:trad-directory}
  \label{fig:prog-pipeline}
  \label{fig:background}
\end{figure}

\begin{table}
  \caption{\textbf{In-network technology tradeoffs.} See \S\ref{ssec:assumptions} for details.}\vspace{-0.5em}
  \label{fig:switchtradeoff}
  \scriptsize
    \renewcommand{\arraystretch}{1.2}
    \begin{tabular}{c|c|c|c|c}
      \hline
      & RMT & FPGA & Custom ASIC & CPU \\\hline\hline
      Line-rate & \cmark & \cmark & \cmark & \xmark \\
      Available & \cmark & \cmark & \xmark & \cmark \ \\
      Low Power & \cmark & \xmark & \cmark & \xmark\\
      Low Cost & \cmark & \xmark & \cmark & \xmark\\
      \hline
    \end{tabular}
  \end{table}

\subsection{Enabling Technologies}
\label{ssec:assumptions}

We now briefly describe \name's enabling technologies.

\paragraphb{Programmable switches} In recent years, programmable switches have evolved along two well-coordinated directions: development of P4~\cite{p4, p4paper, dcp4}, a flexible programming language for network switches, and design of switch hardware that can be programmed with it~\cite{rmt, progswitch2, progswitch3, progswitch4}. These switches host an application-specific integrated circuit (ASIC), along with a general purpose CPU with DRAM, as shown in Figure~\ref{fig:background}~(left). The switch ASIC comprises ingress pipelines, a traffic manager and egress pipelines, which process packets in that order. Programmability via P4 is facilitated through a programmable parser and match-action units in the ingress/egress pipelines, as shown in Figure~\ref{fig:prog-pipeline}~(right). Specifically, the program defines how the parser parses packet headers to extract a set of fields, and multiple stages of match-action units (each with limited TCAM/SRAM and ALUs) process them. The general purpose CPU is connected to the switch ASIC via a PCIe interface, and serves two functions: (i) performing packet processing that cannot be performed in the ASIC due to resource constraints, and, (ii) hosting controller functions that compute network-wide policies and push them to the switch ASIC.

While the above focuses on switch ASICs with Reconfigurable Match Action Tables (RMTs)~\cite{rmt}, it is possible to realize \name using FPGAs, custom ASICs, or even general purpose CPUs. While each of them exposes different tradeoffs (Table~\ref{fig:switchtradeoff}), we adopt RMT switches due to their performance, availability, power and cost efficiency.

\paragraphb{DSM Designs} Traditionally, shared memory has been explored in the context of NUMA~\cite{sgiorigin, amdopteron1, amdopteron2, intelqpi1, intelqpi2} and distributed shared memory (DSM) architectures~\cite{munin, midway, dpram, gam, dash}. In such designs, the virtual address space is partitioned across the various nodes, \ie, each partition has a \textit{home} node that manages its metadata, \eg, the page table. Each node also additionally has a cache to facilitate performance for frequently accessed memory blocks. We distinguish memory blocks from pages since caching granularities, \ie, block, can be different from memory access granularities, \ie, page. 

With the copies of blocks potentially residing across multiple node caches, coherence protocols~\cite{msi, mesi, mesif, moesi, mosi} are required to ensure each node operates on the latest version of a block. In popular directory-based invalidation protocols like MSI~\cite{msi} (used in \name), each memory block can be in one of three states: \textbf{M}odified (\textbf{M}), where a single node has exclusive read and write access to (or, ``owns'') the block, \textbf{S}hared (\textbf{S}), where one or more caches have shared read-only access to the block, and \textbf{I}nvalid (\textbf{I}), where the block is not present in any cache. A directory tracks the state of each block, along with the list of nodes (``sharer list'') that currently hold the block in their cache. The directory is typically partitioned across the various nodes, with each home node tracking directory entries for its own address space partition. Memory access for a block that is not local involves contacting the home node for the block; it triggers a state transition and potential invalidation of the block across other nodes, followed by retrieving the block from the node that owns the block. While it is possible to realize more sophisticated coherence protocols, we restrict our focus to MSI in this work due to its simplicity --- we defer a discussion of other protocols to \S\ref{sec:discussion}.

\subsection{Disaggregated Memory Designs and Challenges}
\label{ssec:challenges}

\noindent
As outlined in \S\ref{sec:intro}, extending the benefits of resource disaggregation to memory and making them widely applicable to cloud services demands (i) low-latency and high-throughput access to memory, (ii) a transparent memory abstraction that supports elastic scaling of memory \textit{and} compute resources without requiring modifications to existing applications. Unfortunately, prior designs for memory disaggregation expose a hard tradeoff between the two goals. Specifically, transparent elastic scaling of an application's compute resources necessitates a shared memory abstraction over the disaggregated memory pool, which imposes non-trivial performance overheads due to the cache-coherence required for both application data \textit{and} memory management metadata. We now discuss why this tradeoff is fundamental to existing designs. We focus on page-based memory disaggregation designs here, and defer the discussion of other related work to \S\ref{sec:related}.

\paragraphb{Transparent designs} While transparent DSMs have been studied for several decades, their adaptation to disaggregated memory has not been explored. We consider two possible adaptations for the approach outlined in \S\ref{ssec:assumptions} to understand their performance overheads, and shed light on why they have remained unexplored thus far. The first is a \textit{compute-centric} approach, where each compute blade owns a partition of the address space and manages the corresponding metadata, but the memory itself is disaggregated. A compute blade must now wait for several sequential remote requests to be completed for every un-cached memory read or write, \eg, to the remote home compute blade to trigger state transition for the block and invalidate relevant blades, and to fetch the memory block from the blade that currently owns the block. An alternate \textit{memory-centric} design that places metadata at corresponding home memory blades still suffers multiple sequential remote requests for a memory access as before, with the only difference being the home node accesses are now directed to memory blades. While these overheads can be reduced by caching the metadata at compute blades, it necessitates coherence for the metadata as well, incurring additional design complexity and performance overheads.

\paragraphb{Non-transparent designs} Due to the anticipated overheads of adapting DSM to memory disaggregation, existing proposals limit processes to a single compute blade~\cite{industry0, industry1, memdisagg1, infiniswap, legoos, disagg}, \ie, while compute blades cache data locally, different compute blades do not share memory to avoid sending coherence messages over the network. As such, these proposals achieve memory performance only by limiting transparent compute elasticity for an application to the resources available on a single compute blade, requiring application modifications if they wish to scale beyond a compute blade. 


\begin{table}
    \caption{\small \textbf{Parallels between memory \& networking primitives.}}\vspace{-0.5em}
    \label{table:isomorph}
    \centering
    \scriptsize
    \renewcommand{\arraystretch}{1.2}
    \begin{tabular}{p{2cm} p{0.7cm}p{2cm}}
      \hline
      \textbf{Virtual Memory} &$\Longleftrightarrow$ &\textbf{Networking} \\\hline\hline
      Memory allocation&&IP assignment\\
      Address translation &&IP forwarding\\
      Memory protection  &&Access control\\
      Cache invalidations &&Multicast\\
      \hline
    \end{tabular}
\end{table}

\section{\name Overview}
\label{sec:overview}

\begin{figure*}[!t]
\centering
\includegraphics[width=0.55\textwidth]{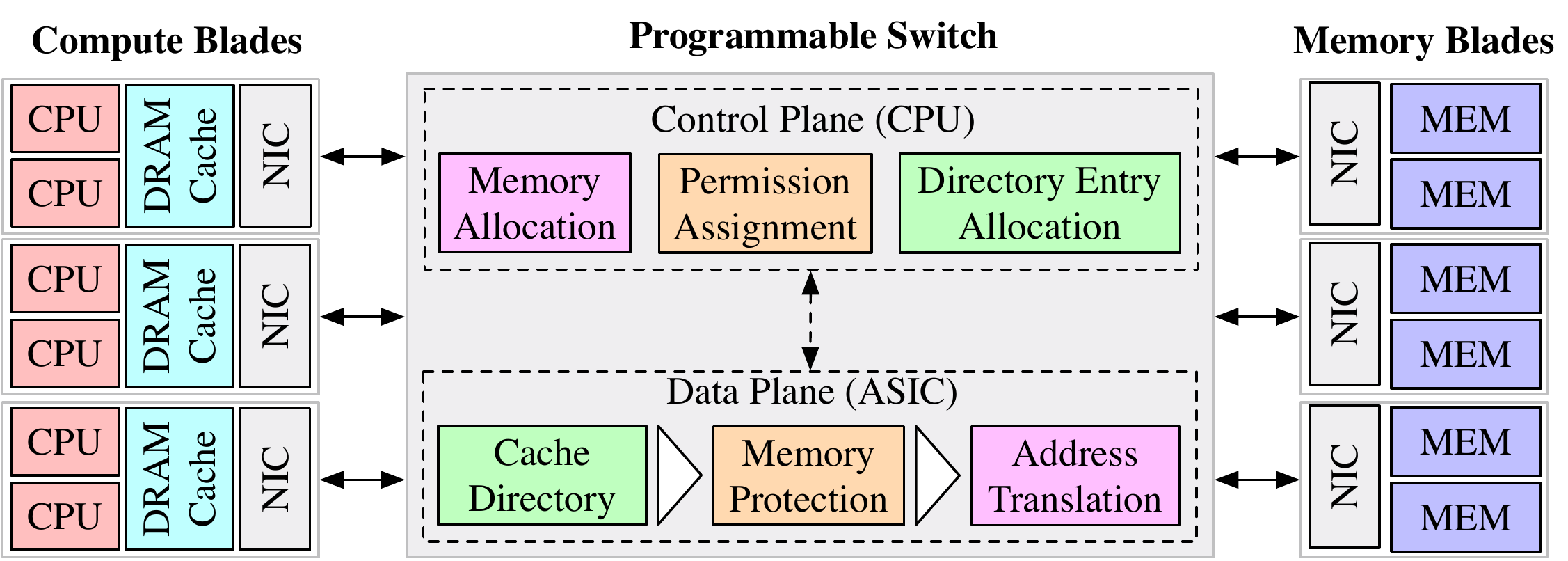}\hspace{3em}
\includegraphics[width=0.35\textwidth]{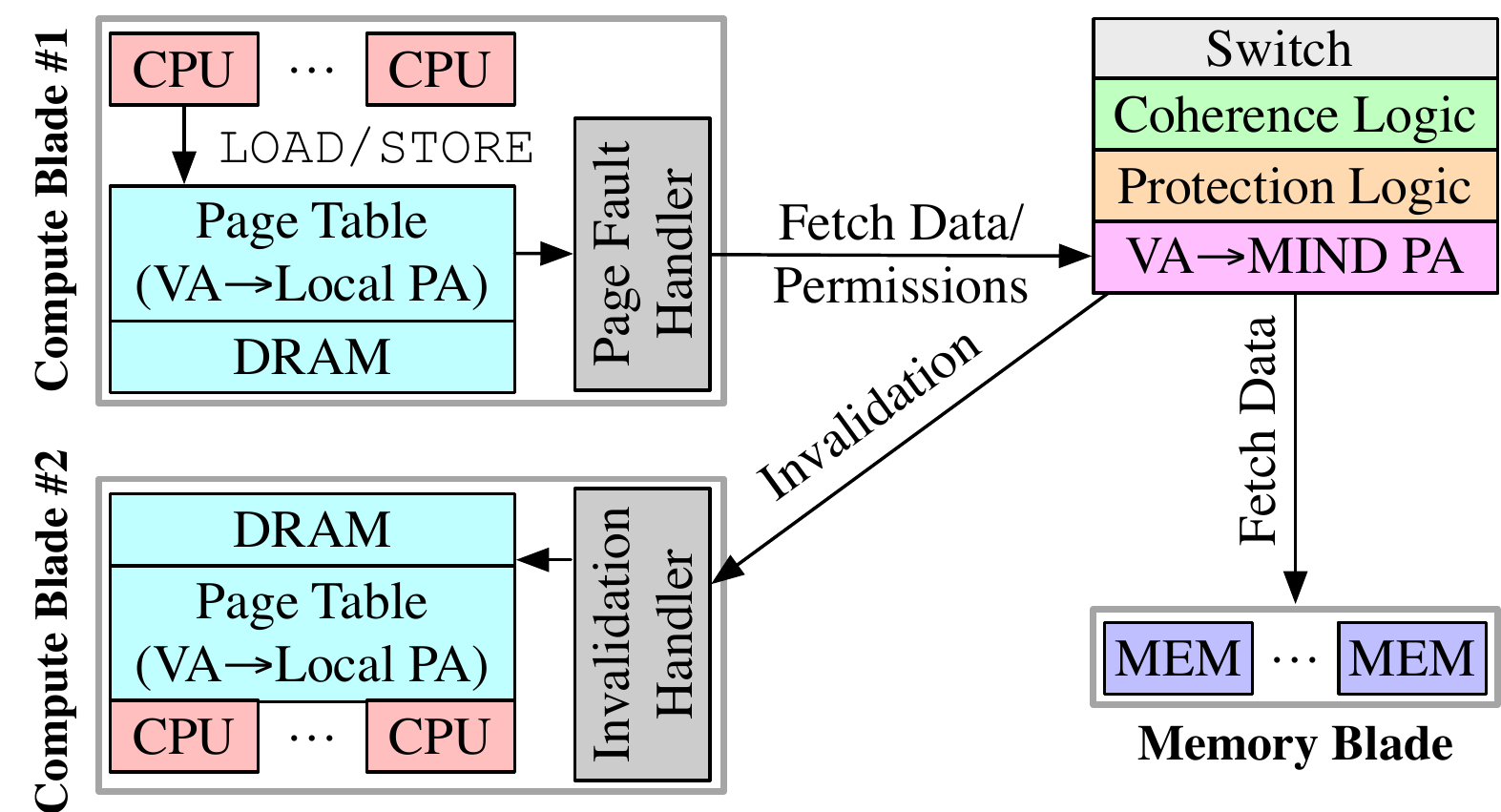}
\caption{\textbf{(left) High-level \name architecture, and, (right) data flow for memory accesses in \name.} See \S\ref{ssec:design} for details.}
\label{fig:system_diagram}
\end{figure*}

To break the tradeoff highlighted above, we place memory management \textit{in the network fabric} for three reasons.
First, the network fabric enjoys a central location in the disaggregated architecture. Therefore, placing memory management in the data access path between compute and memory resources obviates the need for metadata coherence. 
Second, modern network switches~\cite{progswitch1, progswitch2, progswitch3} permit the implementation of such logic in integrated programmable ASICs. We show that these ASICs are capable of executing it at line rate even for multi-terabit traffic. In fact, many memory management functionalities have similar counterparts in networking (\autoref{table:isomorph}), allowing us to leverage decades of innovation in network hardware and protocol design for disaggregated memory management.
Finally, placing the cache coherence logic and directory in the network switch permits the design of specialized in-network coherence protocols with reduced network latency and bandwidth overheads, as we show in \S\ref{sec:design}. 

Effective in-network memory management requires: (\text{i}) \emph{efficient storage}, by  minimizing in-network metadata given the limited memory on the switch data plane;  (\textit{ii}) \emph{high memory throughput}, by load-balancing memory traffic across memory blades; (\textit{iii}) \emph{low access latency to shared memory}, via efficient cache coherence design that hides the network latency.

Next we elicit  three design principles followed by \name to realize the above goals and provide an overview of its design.

\subsection{Design Principles}
\label{ssec:principles}

\name follows three principles to meet the goals for memory disaggregation outlined in \S\ref{sec:intro}:

\paragraphb{P1} \textit{Decouple memory management} functionalities to ensure each can be optimized for their specific goals.

\paragraphb{P2} Leverage \textit{global view} of the disaggregated memory subsystem at a centralized control plane to compute optimal policies for each memory management functionality.

\paragraphb{P3} Exploit \textit{network-centric hardware primitives} at the programmable switch ASIC to efficiently realize policies computed using \textbf{P2}.

\vspace{0.075in}\noindent

\name follows principle \textbf{P1} to decouple memory allocation from addressing (\S\ref{subsec:addr_trans}), address translation from memory protection (\S\ref{subsec:mem_prot}),  cache accesses and eviction from coherence protocol execution (\S\ref{subsec:cache_dir}),  and employs principles \textbf{P2} and \textbf{P3} to efficiently realize their goals. Note that traditional server-based OSes are unable to leverage these principles due to  their reliance on \textit{fixed-function} hardware modules such as the MMU and memory controller --- most common implementations of such modules couple many memory management functionalities (\eg, address translation and memory protection in page-table walkers) for a host of complexity, performance, and power reasons~\cite{vmbook, seesaw, rmmlite}.

\subsection{Design Overview}
\label{ssec:design}

\name exposes a \textit{transparent virtual memory} abstraction to applications, similar to server-based OSes. Unlike prior disaggregated memory designs, \name places all \mmm in the network, instead of compute or memory blades~\cite{infiniswap, fastswap}, or a separate global controller~\cite{legoos}. 

Figure~\ref{fig:system_diagram}~(left) provides an overview of \name design, while Figure~\ref{fig:system_diagram}~(right) depicts the data flow for memory accesses in \name. The \textit{compute blades} run user processes and threads, and possess a small amount of local DRAM that is used as a cache. All memory allocations (\eg, via \texttt{mmap} or \texttt{sbrk}) and deallocations (\eg, via \texttt{munmap}) from the user processes are intercepted at the compute blade, and forwarded to the \textit{switch control plane}. The control plane possesses a global view of the system, which it leverages to perform memory allocations, permission assignments, \etc, using principle \textbf{P2}, and respond to the user process. All memory \code{LOAD}/\code{STORE} operations from the user processes are handled by the compute blade cache (\S\ref{ssec:caching}). The cache is virtually addressed\footnote{Note that while it is hidden from applications, compute blades maintain a local page-based virtual memory abstraction to translate \name virtual addresses to physical addresses for cached pages in local DRAM (Figure~\ref{fig:system_diagram}~(right)).}, and stores permissions for cached pages to enforce memory protection. If a page is not locally cached, the compute blade triggers a page-fault and fetches the page from \textit{memory blades} using RDMA requests, evicting other cached pages if necessary. Similarly, if the memory access requires an update to a cached block's coherence state (\eg, \code{STORE} on a Shared or \textbf{S} block), a page-fault is triggered to initiate cache coherence logic at the switch. Note that the page-fault based design requires \name to perform page-level remote accesses, although future CPU architectures may enable more flexible access granularities (\S\ref{sec:discussion}).

Since the compute blade does not store memory management metadata, the RDMA requests are for virtual addresses and do not contain endpoint information (\eg, IP address) for the memory blade that holds the page. Consequently, the \textit{switch data plane} intercepts these requests. It then performs necessary cache coherence logic, including lookups/updates to the cache directory and cache invalidations on other compute blades (\S\ref{ssec:caching}, \S\ref{sec:algorithm}). In parallel, the data plane also ensures the requesting process has permissions to access the requested page (\S\ref{subsec:mem_prot}). If no compute blade cache holds the page, the data plane translates the virtual addresses to physical addresses (\S\ref{subsec:addr_trans}), forwarding the request to the appropriate memory blade. These memory management functionalities are decoupled as separate modules following \textbf{P1}, and efficiently realized in the switch ASIC following \textbf{P3}.

In \name's design, the memory blades simply store the actual memory pages, and serve RDMA requests for physical pages. Unlike prior works that employ RPC handlers and polling threads~\cite{legoos}, \name leverages one-sided RDMA operations~\cite{farm} to obviate the need for any CPU cycles on the disaggregated memory blades. This is a step towards true hardware resource disaggregation, where memory blades need no longer be equipped with any general-purpose CPUs.

\section{In-Network Memory Management}
\label{sec:design}

Placing memory management logic and metadata in the network provides the opportunity for simultaneously achieving memory performance and resource elasticity. We now describe how \name optimizes for the individual goals of memory allocation and addressing (\S\ref{subsec:addr_trans}), memory protection (\S\ref{subsec:mem_prot}), and cache coherence (\S\ref{ssec:caching}), while operating under the constraints of programmable switches. Finally, we detail how \name handles failures (\S\ref{subsec:acking}). 

\subsection{Memory Allocation \& Addressing}
\label{subsec:addr_trans}

Traditional virtual memory uses fixed sized pages as the basic units for both translation and protection; as a result, it cannot achieve the goal of storage efficiency without increasing memory fragmentation: small pages reduce memory fragmentation but require more translation entries, and vice versa.  Following \textbf{P1}, \name overcomes this by \textit{decoupling} address translation and protection.  That is, \name's translation is blade-based while protection is \code{vma} based (\S\ref{subsec:mem_prot}).

\paragraphb{Storage-efficient address translation} \name eschews page-based protection but uses a \textit{single global virtual address-space} across all processes, allowing translation entries to be shared across them. Our approach builds on decades of research on virtual memory designs that also exploit a single address space~\cite{cheri, cap, gam, grappa, opal}, but adds techniques to minimize storage overheads for in-network address translation.
In particular, \name \textit{range partitions} the virtual address space across different memory blades, such that the entire virtual-address space maps to a contiguous range of physical address space. This allows us to use a single translation entry for each memory blade: any virtual address that falls within its range can be directly routed to that memory blade, minimizing the storage required on switch data plane. In \name, this mapping only changes when new memory blades join, old ones retire or if memory is moved between blades.

\paragraphb{Balanced memory allocation \& reduced fragmentation} \name's control plane, leveraging its global view of allocations (\textbf{P2}), tracks the total amount of memory allocated on each memory blade and places a new allocation on the blade with the least allocation, to achieve near-optimal load-balancing. We validate this empirically in \S\ref{sec:evaluation}.

Moreover, since there is a one-to-one mapping between virtual and physical addresses within a particular memory blade, \name minimizes external fragmentation at each memory blade by using traditional virtual memory allocation schemes that have evolved to facilitate the same, \eg, first-fit allocator in our implementation~\cite{firstfit}.
The result of memory allocation is a virtual memory area (\texttt{vma}), identified by the base virtual address and length of the area, \eg, 
  {{\small <\texttt{0x00007f84b862d33b}, \texttt{0x400}>}}
for a $1$KB area. 
As will be elaborated in \S\ref{subsec:mem_prot}, \code{vma} is the basic unit of protection in \name.
This allows multiple processes to have non-overlapping \code{vma}s on the same blade, minimizing memory fragmentation.
 
\paragraphb{Isolation} We note that \namex's global virtual address-space does not compromise on \textit{isolation} between processes. First, since the switch intercepts allocation requests across all compute blades, and possesses a global view of valid allocations at any time, it can easily ensure allocations are non-overlapping across different processes. Second, we show in \S\ref{subsec:mem_prot} that \name's \code{vma}-based protection allows flexible access control between processes in a single global virtual address-space.

\paragraphb{Transparency via outlier entries} \namex's one-to-one mapping between virtual and physical addresses does not preclude supporting unmodified applications with static virtual addresses embedded within their binaries, or OS optimizations such as page migration~\cite{pagemigrations}, \ie, moving pages from one memory blade to another. \name maintains separate \textit{range-based} address translations~\cite{rangetranslations} for physical memory regions that correspond to static virtual addresses or migrated memory. These \textit{outlier} entries are stored succinctly in the switch TCAM, where the TCAM's longest-prefix matching (LPM) property ensures that only the most specific entry (\ie, one with the longest prefix) is considered when translating a virtual address, ensuring correctness.


\subsection{Memory Protection}
\label{subsec:mem_prot}

As \name decouples translation and protection, it uses a separate table to store memory protection entries in the data plane. 
Consequently, an application can assign access permissions to a \code{vma} of any size.
The size of this protection table is proportional to the number of \code{vma}s. We find this number is reasonably small in our experiments and the protection table can easily fit in the switch ASIC even for a wide range of memory-intensive applications (\S\ref{sec:evaluation}). This is because the first-fit allocator and Linux's \code{glibc} allocation requests~\cite{glibc-alloc} do a good job of ensuring \code{vma}s are large and contiguous. 

\paragraphb{Fine-grained, flexible memory protection} Similar to prior work on capability-based systems~\cite{cheri, capabilityaddr}, \name supports two key abstractions: \textit{protection domains} and \textit{permission classes}. Protection domains identify the entity that may (or may not) have permissions to access a particular memory region of arbitrary size, while the permission class identifies what the entity can do to the memory region. 
\name's control plane exposes a set of APIs for memory allocation and permission changes that allows an application to specify a protection domain identifier (\code{PDID}) for an arbitrary virtual memory area (\code{vma}) and assign a permission class (\code{PC}) to the pair \code{<PDID}, \code{vma>}. 
The mapping \code{<PDID}, \code{vma>} $\rightarrow$ \code{PC} is stored as an entry in the protection table in the data plane. For existing applications, \name simply takes the process identifier (\code{PID}) as the \code{PDID},  and uses Linux memory permissions (\eg, read-only, read-write, \etc) as permission classes. Note that \name \textit{can} support richer memory protection semantics than traditional OSes, \eg, user programs that serve multiple client sessions, such as ssh servers or database services, can assign a separate protection domain per session to prevent one session from accessing data from other sessions~\cite{cheri}. 

Following principle \textbf{P3}, we leverage TCAM-based parallel range matches in the programmable switch ASIC --- typically used for IP subnet matches --- to efficiently support fine-grained matching for \code{<PDID}, \code{vma>} entries embedded in memory access requests and obtain corresponding the permission class (\code{PC}). If there is a mismatch between \code{PC} and the memory access type, or the \code{<PDID}, \code{vma>} entry does not exist, the request is rejected.

\paragraphb{Optimizing for TCAM storage} One limitation of TCAM is that each of its entries can only match power-of-two ranges. \name overcomes this by splitting an arbitrary-sized virtual address range into multiple power-of-two-sized entries. Note that the number of entries required for a range of size $s$ is upper-bounded by $\lceil\log_2(s)\rceil$. In order to meet our goal of storage efficiency in the switch data plane, the control plane (1) only performs virtual address allocations that are aligned with the power-of-two sizes to ensure each region can be represented using a single TCAM entry, and (2) coalesces adjacent entries with if they belong to the same protection domain and have the same permission class. Interestingly, memory allocations requested by underlying libraries (\eg, \texttt{glibc}) are mostly in power-of-two sizes anyway, enabling storage-efficiency for TCAM entries.

\subsection{Caching \& Cache Coherence}
\label{ssec:caching}

In \name design, while the caches\footnote{Note that we use the term `cache' to refer to the DRAM at the compute blade under the partial disaggregation model, and not hardware (L1/L2/L3) caches.} reside on compute blades, the coherence directory and logic reside in the switch. This already permits access to the cache directory in half a round-trip, significantly reducing the latency overheads for the coherence protocol execution. For MSI protocol, even the most expensive and relatively uncommon transitions (\ie, \textbf{M}$\rightarrow$\textbf{S/M}) incur two round-trips, while common transitions incur only a single round-trip, as we show in \S\ref{ssec:bottlenecks}. While performance is a primary objective in \name's cache coherence, the coherence protocol must also be realizable under the compute and memory constraints of switch ASICs. We now outline challenges in adapting traditional cache management to our in-network setting, along with how \name resolves them.

\subsubsection{Storage vs. performance tradeoff}\label{ssec:storagevsperf}\hfill\\
Traditional caching and cache coherence mechanisms applied to \name expose a tradeoff between cache performance and the storage efficiency at the switch data plane. Specifically, reducing the number of directory entries requires larger cache granularities (\ie, larger memory blocks), which results in worse performance. For instance, when large (\eg, $2$ MB) memory blocks are used, updating a small (\eg, $4$ KB) region within the block will invalidate the entire block. We refer to these invalidations as \emph{false invalidations} --- dirty pages invalidated along with the requested page because there are in the same memory block tracked by a directory entry. This leads to wastage in both memory bandwidth and cache capacity, \ie, fewer frequently accessed data items in the cache. We empirically highlight this tradeoff in \S\ref{ssec:sensitivity}.

\name addresses this challenge using two approaches: it decouples the cache and directory granularities (following principle \textbf{P1}), and appropriately sizes the memory region tracked by each cache directory entry leveraging the global view of memory traffic at the control plane (following principle \textbf{P2}), as we describe next.

\paragraphb{Decoupling cache access \& directory entry granularities} Our first approach employs principle \textbf{P1} --- decoupling the granularity of cache (and memory) accesses from the granularity at which cache coherence is performed. This allows memory accesses (\eg, evictions or remote memory reads) to be performed at finer granularities, while directory entries are tracked at coarser granularities. Specifically, accesses to the local DRAM cache at compute blades, and even the movement of data between those DRAM caches and memory blades, occur at the fine page granularity ($4$ KB in \name, similar to prior work~\cite{legoos, infiniswap, fastswap}). However, the coherence protocol tracks directory entries (stored at the switch data plane) at larger, variable-sized \textit{region} granularities --- when a $4$ KB page is cached at a compute blade, \name creates a directory entry for the region that contains the page. An invalidation of the region triggers an invalidation of all dirty pages in the region, as tracked by the individual compute blades that cache them.

\paragraphb{Storage \& performance-efficient sizing of \textit{regions}} Even with the decoupling described above, the region \textit{sizes} still expose a tension between coherence performance (\eg, larger false invalidation counts due to larger region sizes) and directory storage efficiency (\eg, more directory entries due to smaller region sizes). To appropriately size regions, \name leverages global view of memory traffic at the switch control plane (\textbf{P2}). Briefly, \name starts each directory entry with a very large memory region; when the overhead due to false invalidation is high, it splits the region and creates a new directory entry. It does so repeatedly, until either the overhead is below a predetermined threshold, or the region size reaches $4$~KB, \ie, the page size. In doing so, \name dynamically customizes the region sizes to resolve the tension between performance and directory storage efficiency using a novel \sizing algorithm --- we defer its details to \S\ref{sec:algorithm}.

\subsubsection{In-Network Coherence Protocol}\label{subsec:cache_dir}\hfill\\
\noindent
Due to the limited computational capability at the switch ASIC, \name employs the simple directory-based MSI coherence protocol~\cite{msi}. While we defer the implementation details of the coherence protocol to \S\ref{ssec:switchimpl}, we highlight here how \name employs network-centric hardware primitives to efficiently realize the coherence protocol in the switch, leveraging principle \textbf{P3}. Specifically, several state transitions in the MSI protocol require generating invalidation requests to compute blades that have shared access to a region, to ensure correctness. To facilitate this in a network-efficient manner, we leverage \textit{multicast} functionality supported natively in most switches --- we create a multicast group for all compute blades in the rack, and send an invalidation request containing the list of sharers to the group. However, broadcasting invalidations to blades not in the sharer list would consume unnecessary network bandwidth. As such, we embed the sharer list within the invalidation request, and drop requests in the egress path of the switch data plane if the output port does not lead to a blade in the sharer list.

\subsection{Handling Failures}\label{subsec:acking}

We now discuss how \name handles failures at different components in our disaggregated architecture.

\paragraphb{Compute, memory blade and switch failures} \name does not innovate on fault-tolerance for compute and memory blade failures: mechanisms developed in prior work~\cite{legoos, infiniswap, disaggfault} for fault tolerance can be readily adapted to our design. To handle switch failures, we consistently replicate the control plane at a backup switch --- on a failure, the data plane state is reconstructed at the backup switch using the control plane state. Since the control plane is only updated infrequently due to metadata operations (\eg, system calls), the overhead added due to such replication in minimal.

\paragraphb{Communication failures} \name uses ACKs and timeouts to detect packet losses. When a memory access triggers invalidations, the requesting compute blade waits for ACKs indicating successful invalidation from all sharers, and resends the request if timeout occurs. If the compute blade does not receive an ACK even after a predefined number of retransmissions, it sends a \code{reset} message for the corresponding virtual address to the switch control plane. This, in turn, forces all compute blades to flush their data for that address and removes the corresponding cache directory entry in the data plane. This \code{reset} mechanism prevents deadlocks when compute blades fail in the middle of a cache coherence state transition.


\section{\Algo: Algorithm \& Analysis}
\label{sec:algorithm}

We next provide details and a formal analysis of the \algo algorithm used by \name to dynamically determine the memory region size tracked by each cache directory entry.  This algorithm is a key component of \name's cache coherence design outlined in \S\ref{ssec:caching}.

\subsection{Algorithm}
\label{ssec:detail}

The \algo algorithm starts by partitioning the entire virtual address space into $N$ contiguous regions of size $M$ pages each. It then works in disjoint \emph{epochs} of equal length. In each epoch, it tracks the total number of times any page is falsely invalidated within the epoch  --- we refer to this as the \textit{false invalidation count} --- for every region.

\Algo uses false invalidation count as a measure of the performance overhead --- we characterize the impact of false invalidations on performance in \S\ref{sec:evaluation}. It therefore seeks to keep the count below a threshold, denoted by $t$. 
If any region has a false invalidation count $> t$ in an epoch, it splits that region into two equal halves and creates a new directory entry accordingly. It bounds the smallest size of any memory region to $4$KB (the page size), ensuring that any $M$ sized block is split at most $\log_2{M}$ times over as many epochs. Note that for 4KB regions, the number of falsely invalidated pages is trivially zero; however, maintaining all regions at that size would require storing $N\cdot M$ directory entries at the switch, which is impractically large.

\paragraphb{Stability assumptions} The \algo is based on two implicit assumptions related to the epoch length:
\begin{itemize}[itemsep=0pt, leftmargin=*]
  \item the access pattern across the various memory regions remains stable for at least $O(\log_2{M})$ epochs, and
  \item the set of allocated pages remains unchanged across $O(\log_2{M})$ epochs.
\end{itemize}

\noindent
We show in \S\ref{sec:evaluation} that appropriate epoch sizing allows us to ensure both assumptions for our evaluated workloads.

\subsection{Performance Bounds}
\label{ssec:bound}
The key challenge in \algo is bounding the number of directory entries that must be stored, one per memory region. The total \textit{worst-case} number of regions depends on two factors: (i) worst-case number of \textit{sub}-regions generated by each $M$-sized region, and (ii) the value of $t$. We first analyze the worst-case bound on the number of sub-regions per $M$-sized region, and then bound the worst-case for total number of regions overall (and therefore, the total number of directory entries) by appropriately setting the value of $t$.

\paragraphb{Bounding the number of sub-regions per $M$-sized region} We establish the worst-case bound in the following theorem:

\begin{myTheorem}\label{theorem:region}
  The number of sub-regions for an $M$-sized region with false invalidation count $f$ is upper-bounded by $S=(\lceil\frac{f}{t}\rceil-1)\cdot(1+\log_2{M})$.
\end{myTheorem}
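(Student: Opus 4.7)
The plan is to model the splits performed on the $M$-sized region as a rooted binary tree: the root is the original region, each internal node is a region that was subsequently split, and the leaves are the current sub-regions. Two basic structural facts drive the argument. First, since the algorithm refuses to shrink any region below a single $4$~KB page, the depth of this tree is at most $\log_2 M$. Second, any region is split at most once per epoch (at the end of an epoch in which its count exceeds $t$), so all the splits for this $M$-sized region occur across at most $1+\log_2 M$ consecutive epochs.

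Next I would establish a per-epoch budget on the number of splits. The central subclaim is a monotonicity statement: under the stability assumptions in \S\ref{ssec:detail}, the sum of false-invalidation counts over the current sub-regions of the $M$-sized region in any single epoch is at most $f$. The intuition is that a page $p$ is falsely invalidated in an epoch only when some other page $p'$ sharing its directory entry is accessed, and splitting a directory entry strictly shrinks the set of pages co-tracked with $p$. Hence, under a fixed access pattern the set of false-invalidation events at each page can only lose members as the partition refines, and summing over all pages in the $M$-sized region gives $\sum_{r} c_r^i \leq f$ for every epoch $i$, where $c_r^i$ denotes the count of sub-region $r$ in epoch $i$.

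Given this subclaim, a pigeonhole argument delivers the per-epoch split bound. If $k$ sub-regions are split in epoch $i$, each of their counts strictly exceeds $t$, so $kt < \sum_r c_r^i \leq f$, which forces $k \leq \lceil f/t \rceil - 1$. Summing this uniform bound over the at most $1+\log_2 M$ epochs in which splits can occur, the total number of splits---and hence the total number of sub-regions produced from the $M$-sized region---is at most $(\lceil f/t \rceil - 1)(1+\log_2 M) = S$, as claimed.

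The hard part will be the monotonicity subclaim on per-epoch totals. A naive argument that ignores stability fails: an adversarial access pattern that shifts its hot set after each split could push each freshly-created sub-region's count close to $f$ in the next epoch, blowing up the per-epoch budget and invalidating the bound. I would therefore invoke the stability assumptions explicitly (fixed access pattern and fixed allocation set across the $O(\log_2 M)$-epoch splitting horizon) and verify that under them every false-invalidation event at a child sub-region can be mapped injectively to a false-invalidation event at its parent under the same pattern, which is what makes the monotonicity step---and hence the whole bound---go through.
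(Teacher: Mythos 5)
Your proof is correct in substance but reaches the bound by a genuinely different accounting than the paper's. Both arguments rest on the same two facts --- the paper's observation \textbf{O1} (splitting a region cannot increase the total false-invalidation count across the two halves; as you rightly stress, this is exactly where the stability assumptions of \S\ref{ssec:detail} are needed, and the paper asserts it without the injection argument you sketch) and the $\log_2 M$ bound on splitting depth. The paper then does an extremal case analysis on the \emph{shape} of the splitting tree: for $t < f \leq 2t$ at most one child can remain above threshold, so the worst case is a single chain contributing $1+\log_2{M}$ leaves, and for general $f$ the worst case manufactures at most $\lceil\frac{f}{t}\rceil - 1$ disjoint ``internal'' regions with count in $(t, 2t]$, each seeding such a chain. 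You instead bound the number of splits \emph{per epoch}: since the counts of the current sub-regions sum to at most $f$ in every epoch (your monotonicity subclaim, which is \textbf{O1} iterated down the tree) and every region that splits has count exceeding $t$, at most $\lceil\frac{f}{t}\rceil - 1$ regions split per epoch, over at most $1+\log_2{M}$ epochs. Your route is the more rigorous of the two --- the paper's Case 3 is an informal ``the worst case must look like this'' argument, whereas your pigeonhole step is airtight --- at the cost of the explicit worst-case picture the chain construction provides. One small repair: the number of sub-regions is the number of splits \emph{plus one} (the tree starts with one leaf and each split adds one), so as written you obtain $1+(\lceil\frac{f}{t}\rceil-1)(1+\log_2{M})$; you recover the stated bound by observing that the first epoch admits at most one split rather than $\lceil\frac{f}{t}\rceil-1$ (and note the theorem is anyway loose at $f\leq t$, where the formula gives $0$ but $S=1$ --- a slip already present in the paper's own Case 1).
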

\begin{proof}
  In the \algo algorithm, we dynamically manage sizes of each memory region, splitting it into two smaller regions until the false invalidation count for the region is less than $t$. As noted above, since we limit the smallest region size to 4KB, an $M$-sized region may be split at most $\log_2{M}$ times, across as many epochs. Figure~\ref{fig:cache_tree} depicts the splitting process as a binary tree across the various epochs, where the level of the tree $l$ denotes the epoch index ($0 \leq l < \log_2{M}$). For the sake of exposition, we set $M=2$MB. We leverage two observations to prove the above bound:
  \begin{itemize}[itemsep=0pt, leftmargin=*]
    \item{\textbf{O1:}} Splitting a region can only \textit{decrease} the false invalidation count across the two splits, \ie, if a region with $f$ false invalidation count is split into two regions with $f'$ and $f''$ false invalidation count, then $f' + f'' \leq f$.
    \item{\textbf{O2:}} For a 4KB region, the false invalidation count is zero.
  \end{itemize}  
  
\begin{figure}[t]
  \centering
  \includegraphics[width=0.97\columnwidth]{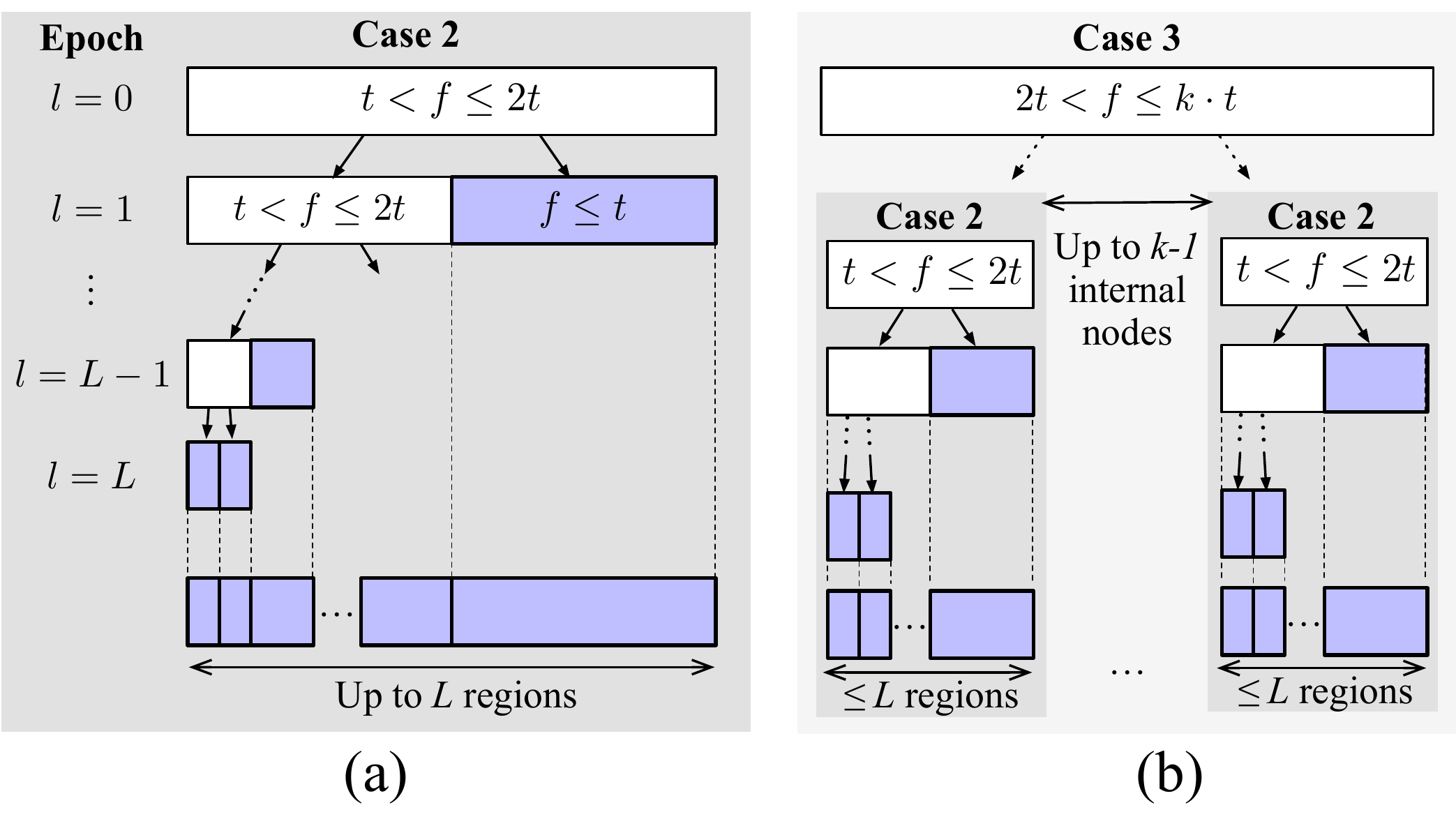}
  \caption{\textbf{Splitting process for cache blocks depicted as a binary tree.} Note that $L=\log_2{M}$; see 
  \S\ref{ssec:bound} for details.}\label{fig:cache_tree}
\end{figure}
  
  \noindent
  The maximum number of regions $S$ generated by splitting an $M$-sized region can be categorized into three cases:
  
  \paragraphb{Case 1: $f \leq t$} Since false invalidation count is already below the threshold, the region does not need to be split, \ie, $S=1$.
  
  \paragraphb{Case 2: $t < f \leq 2t$} To bring the false invalidation count below $t$, the region will be split into two. Due to observation \textbf{O1}, there are two possibilities: (i) both resulting regions have false invalidation count $<t$, or (ii) one region $r_1$ still has false invalidation count $>t$ while the other region $r_2$ has false invalidation count $<t$. For (i), the resulting regions do not need to be split any further, while for (ii), $r_1$ must be split further in the next epoch. In the worst case, the splits will continue for at most $\log_2{M}$ epochs --- when the region size reaches 4KB, no further splits will be required (due to observation \textbf{O2}). Thus, $S = 1 + \log_2{M}$, as shown in Figure~\ref{fig:cache_tree}~(a).
  
  \paragraphb{Case 3: $2t < f \leq k\cdot t$, where $k = \lceil\frac{f}{t}\rceil$} The worst-case scenario that maximizes the number of generated regions must maximize the number of ``internal nodes'' that can generate such regions in the binary tree depicting the splitting process. In particular, the scenario should create as many internal node regions with false invalidation count between $t$ and $2t$ as possible, and then employ \textbf{Case 2} to maximize the number of final regions generated by each ``internal node'' region. Note that for $2t < f < k\cdot t$, the region may be split into at most $k - 1$ regions where each region has false invalidation count between $t$ and $2t$ (regardless of how many epochs it takes). With the worst-case number of regions generated by each such internal node region given in \textbf{Case 2}, the upper bound on the number of generated regions is given by (Figure~\ref{fig:cache_tree}~(b)):
  $$S = (k - 1) \cdot (1 + \log_2{M}) = (\lceil\frac{f_i}{t}\rceil - 1) \cdot (1+\log_2{M})$$
  \noindent
  As such, across the three cases, the total number of regions is at most $(\lceil\frac{f}{t}\rceil - 1) \cdot (1+\log_2{M})$.
\end{proof}

\paragraphb{Bounding the total number of regions} We now consider the worst-case number of regions $S_{max}$ contributed by \textit{all} $M$-sized regions. Let $f_i$ be the number of false invalidation count for an $M$-sized region $i$ ($1 \leq i \leq N$), and $S_i$ be the worst-case number of regions generated by it, then:
$$ S_{max} = \sum\limits_{i=1}^{N} S_i = \sum\limits_{i=1}^{N} (\lceil\frac{f_i}{t}\rceil - 1) \cdot (1+\log_2{M}) $$
\noindent
To bound $S_{max}$, we must set $t$ appropriately. In order to ensure fairness across all $M$-sized regions, we set the threshold $t$ as a fraction of the average false invalidation across them, \ie, 
\begin{align}\label{eq:t}
  t = \frac{1}{c\cdot N} \cdot \sum\limits_{i=1}^{N} f_i
\end{align}
\noindent
where $c$ is a constant parameter.

This allows us to bound $S_{max}$ as follows:
\begin{align*}
  S_{max} &= \sum\limits_{i=1}^{N} (\lceil\frac{f_i}{t}\rceil - 1) \cdot (1+\log_2{M}) \leq \sum\limits_{i=1}^{N} \frac{f_i}{t} \cdot (1+\log_2{M})                &\\
          &= c \cdot N \cdot (1+\log_2{M}) \text{$\qquad$(From Eq.~\ref{eq:t})}
\end{align*}
\noindent
If use up all the available switch data plane capacity to store $S_{max}$ entries, we can set $c$ as $\frac{S_{max}}{N\cdot(1+\log_2{M})}$, which will always ensure the total number of regions is $\leq S_{max}$.

\paragraphb{Split vs. merge-based approach} The approach we have described so far starts with $M$-sized regions, and splits into regions until the false invalidation count for each region reduces below the threshold $t$. An alternate but equivalent strategy would begin with 4KB regions and merge them into larger regions as long as the false invalidation count per region remains below $t$. In fact, it is possible to begin with any intermediate region size, and split or merge as necessary. In \name, we use a default of $16$KB since it provides a favorable tradeoff between storage and performance overheads for our evaluated workloads --- we defer a detailed analysis to \S\ref{sec:evaluation}.

\paragraphb{From theory to practice} At $c=1$, the dynamic resizing approach outlined above reduces the amount of storage required for directory entries from $M\cdot N$ to a worst-case of $(1+\log_2{M})\cdot N$ --- an \textit{exponential} decrease. At the same time, it ensures that the number of false invalidation count remains under $\frac{\sum f_i}{N}$. However, we note that our theoretical analysis only reveals the \textit{worst-case} --- in practice, we find both storage and performance overheads are much lower, as we show in \S\ref{sec:evaluation}. As such, we can set the value $c > 1$ to increase switch data plane storage utilization without hitting its capacity in practice. In fact, we dynamically adjust the value of $c$ such that the utilization of the switch data plane storage in any epoch remains below $95\%$.

\section{Implementation Details}
\label{sec:impl}

We now describe \name implementation. \name exposes Linux memory and process management system call APIs, and splits its kernel components across compute blade and the programmable switch. We now describe these kernel components, along with the RDMA logic required at the memory blade.

\subsection{Compute Blade}
\label{ssec:cpumemimpl}

\name assumes a partial disaggregation model, where the compute blades possess a small amount of local DRAM as cache (\S\ref{ssec:assumptions}). The compute blades in our prototype use traditional servers with no hardware modifications. We implemented compute blade kernel components as a modified Linux kernel 4.15. \name provides transparent access to the disaggregated memory, by modifying how \code{vma}s and processes are managed and how page faults are handled at the compute blade, as we detail next.

\paragraphb{Managing \code{vma}s} To handle the creation and removal of \code{vma}s due to process heap allocation/deallocation requests, such as \code{brk}, \code{mmap}, and \code{munmap}, the kernel module intercepts such requests from the process and forwards them to the control plane at the switch over a reliable TCP connection. The switch subsequently creates new \texttt{vma} entries, and responds with the same return value (\eg, virtual address of the allocated \code{vma}) as the local version of the system calls --- ensuring transparency for user applications. The switch returns Linux-compatible error codes (\eg, \code{ENOMEM}) if there are any errors.

\paragraphb{Managing processes} The kernel module also intercepts and forwards process creation and termination requests, such as \code{exec} and \code{exit}, to the switch control plane, which maintains the internal representation of processes (\ie, Linux's \code{task\_struct}) and a mapping between the compute blades and processes they host. \name assigns threads running on different compute blades the same PID if they belong to the same process, permitting them to transparently share the same address-space via memory protection and address translation rules installed at the switch. Finally, we do not focus on scheduling in this work and simply place threads and processes across compute blades in a round-robin manner.

\paragraphb{Page fault-driven access to remote memory} When a user application tries to access a memory address not present in the compute blade cache, a page fault handler is triggered and the compute blade kernel sends a one-sided RDMA read request to the switch with the virtual address and the requested permission class, i.e., read or write for Linux. At the same time, the page to be used by the user application is registered to the NIC as the receiving buffer, obviating the need for additional data copies. Once the page is received, the local memory structures such as PTEs are populated and the control is returned to the user. Our implementation of the compute blade DRAM cache is similar to LegoOS~\cite{legoos}, but additionally handles cache invalidations for coherence. Specifically, the cache tracks the set of writable pages locally, and on receiving an invalidation request for a region, it flushes all writable pages in the region and removes all local PTEs.

While the above approach enables transparency for access to disaggregated memory, it presents a significant limitation in our implementation --- it restricts the memory consistency model in \name to stronger Total Store Order (TSO), and precludes weaker consistency models, \eg, Process Store Order (PSO) used in DSM approaches~\cite{gam}. This is because unlike TSO, PSO enables multiple writes to a cached memory region to be propagated asynchronously, but blocks if there is a subsequent read to the same region. Realizing this relaxation using page faults requires such writes to be buffered at the compute blade's local DRAM cache without triggering a page fault, but triggering one on a subsequent read to the same page. Unfortunately, this is impossible in traditional x86 or ARM architectures, since they do not support triggering a trap on read without also triggering one for a write. Consequently, \name's stricter TSO model results in limited scalability for workloads with high read/write contention to shared memory regions, as we show in \S\ref{subsec:macro_bench}. We discuss possible architectural changes to address this in \S\ref{sec:discussion}.

\subsection{Memory blade} 

Unlike prior disaggregated memory systems~\cite{legoos, infiniswap} or distributed shared memory systems~\cite{gam}, \name does not require any compute logic or data plane processing logic to run on the memory blades, obviating the need for general purpose CPUs on them. However, since the memory blade in our prototype is realized on traditional Linux servers, we rely on the kernel module at the memory blade to perform RDMA-specific initializations. When a memory blade comes online, its kernel module registers physical memory addresses to the RDMA NIC and reports the mapped address to the global controller. However, subsequent one-sided RDMA requests from compute blades are handled completely by the memory blade NIC without involving the CPUs. Ideally, memory blades would be realized with all logic, including initialization, completely in hardware, without a CPU. While this would facilitate a memory blade design that is both simple and cheap, it would require new hardware design.

\subsection{Programmable Switch}
\label{ssec:switchimpl}

The \name programmable switch module is implemented on a 32-port EdgeCore Wedge switch with a $6.4$~Tbps Tofino ASIC and an Intel Broadwell processor, $8$~GB RAM and $128$~GB SSD. The general purpose CPU hosts the \name control program, which performs process, memory and cache directory management. Meanwhile, the ASIC performs address translation (\S\ref{subsec:addr_trans}) and memory protection (\S\ref{subsec:mem_prot}), handles directory state transitions and virtualizes RDMA connections between compute and memory blades. We here provide implementation details of the mechanisms not already described in \S\ref{sec:design}.

\paragraphb{Process \& memory management} The control plane hosts a TCP server to handle system call intercepts from compute blades, and maintains traditional Linux data structures for process/thread management (\code{task\_struct}) as well as memory management (\code{mm\_struct}, \code{vm\_area\_struct}). On receiving a system call, the control plane modifies the data structures accordingly, and responds with return values consistent with the system calls to ensure transparency.

\paragraphb{Cache directory management} \name reserves a fixed amount of SRAM at the data plane for storing directory entries, and partitions it into fixed sized slots, one for each \textit{region} entry in \name. The control plane maintains a \textit{free list} for available slots, and a hash table \textit{used map} which maps the base virtual address for the dynamically sized cache region to the SRAM slot storing its directory entry. All slots are initially added to the free list. \name creates a directory entry for a region during its allocation by removing an SRAM slot from the free list, populating it with the directory entry with invalid (\textbf{I}) state, creating a match-action rule that maps the block virtual address to the SRAM slot at the data plane, and updating its \textit{used map}. A similar process occurs when a region is split, while removing a directory entry follows the reverse procedure.

\begin{figure}[t]
  \centering
  \includegraphics[width=0.97\columnwidth]{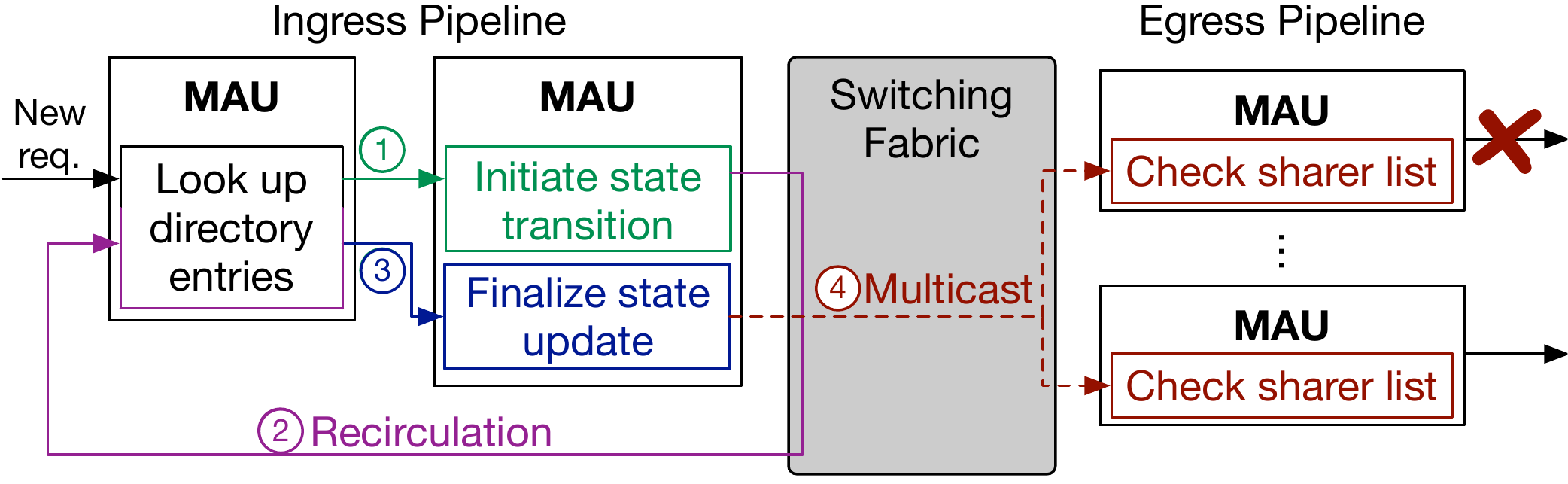}
  \caption{\textbf{Performing directory state transitions on switch ASIC.}}
  \label{fig:cc_example}
\end{figure}

\paragraphb{Directory state transitions} We found that a single match-action unit (MAU) in today's switch ASICs is unable to (i) perform a directory entry lookup, (ii) determine the correct translation based on the current block state and memory access request, and (iii) update the directory entry accordingly all at once, due to their limited compute capability. As such, we split the logic for (i-ii) across two MAUs (\circled{1} in Figure~\ref{fig:cc_example}): the first MAU stores the directory entries and performs (i), while the second MAU stores a \textit{materialized} state-transition table containing all possible transitions and corresponding actions to be performed for (ii). Note that explicitly storing the state-transition table trades-off data plane memory capacity to overcome the limited compute cycles in an MAU. To perform (iii), the second MAU \textit{recirculates} the memory access request packet within the switch data plane to send it back to the first MAU (\circled{2}), so that it can update the directory entry according to actions determined by the second MAU (\circled{3}). If the state transition requires cache invalidations, the data plane creates invalidation requests leveraging \textit{multicast}, as described in \S\ref{subsec:cache_dir}. Specifically, these requests are only forwarded to the current sharers for the relevant page (\circled{4}).

\paragraphb{Virtualizing RDMA connections} When a compute blade in \name issues an RDMA request, it does not know the location of the blade where the page resides. Consequently, the switch data plane in \name \textit{virtualizes} RDMA connections between all possible compute-memory blade pairs by transparently transforming and redirecting corresponding RDMA requests and responses between them. Specifically, once an RDMA request's destination blade is identified via address translation or cache coherence, the data plane updates the request's packet header fields such as IP/MAC addresses and RDMA-specific parameters, before forwarding it to the blade.

\section{Evaluation}
\label{sec:evaluation}

\begin{figure*}[ht!]
  \centering
  \includegraphics[width=0.345\textwidth]{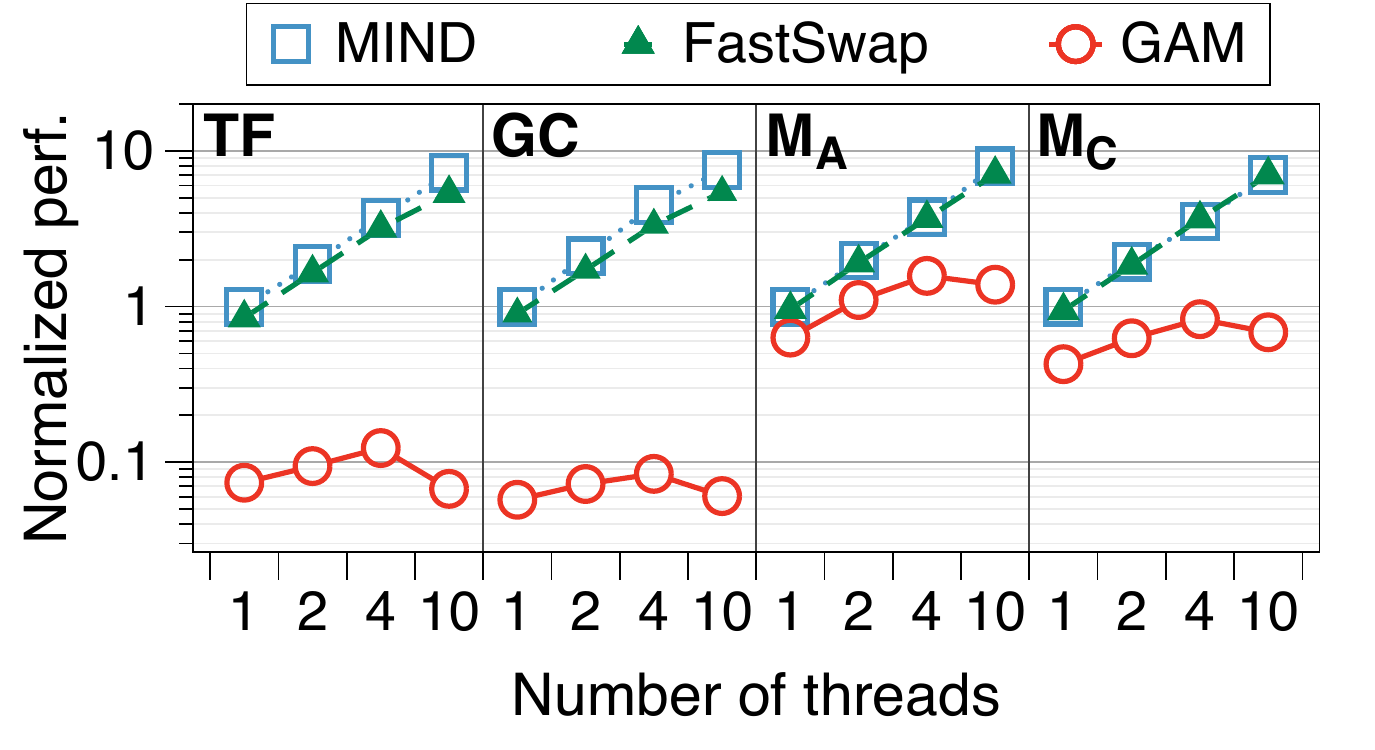}\hspace{-0.25em}
  \includegraphics[width=0.3266\textwidth]{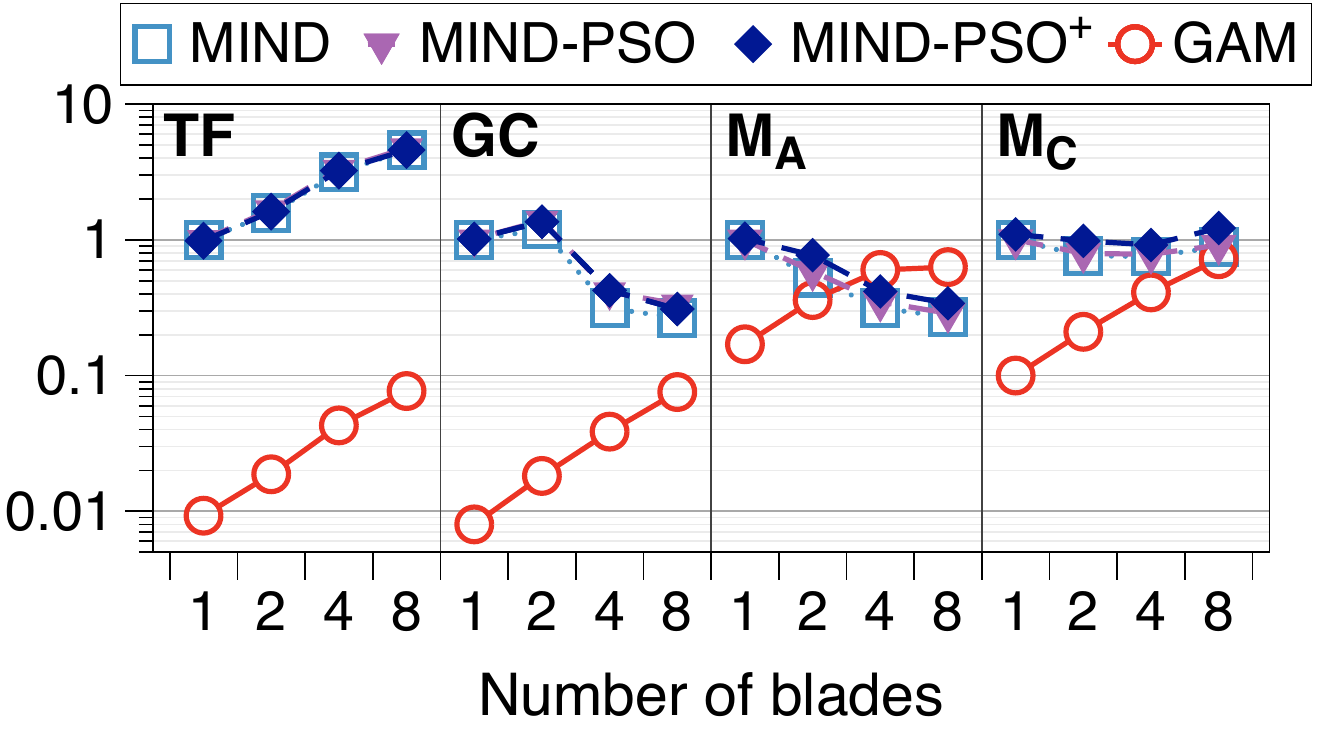}\hspace{-0.25em}
  \includegraphics[width=0.3247\textwidth]{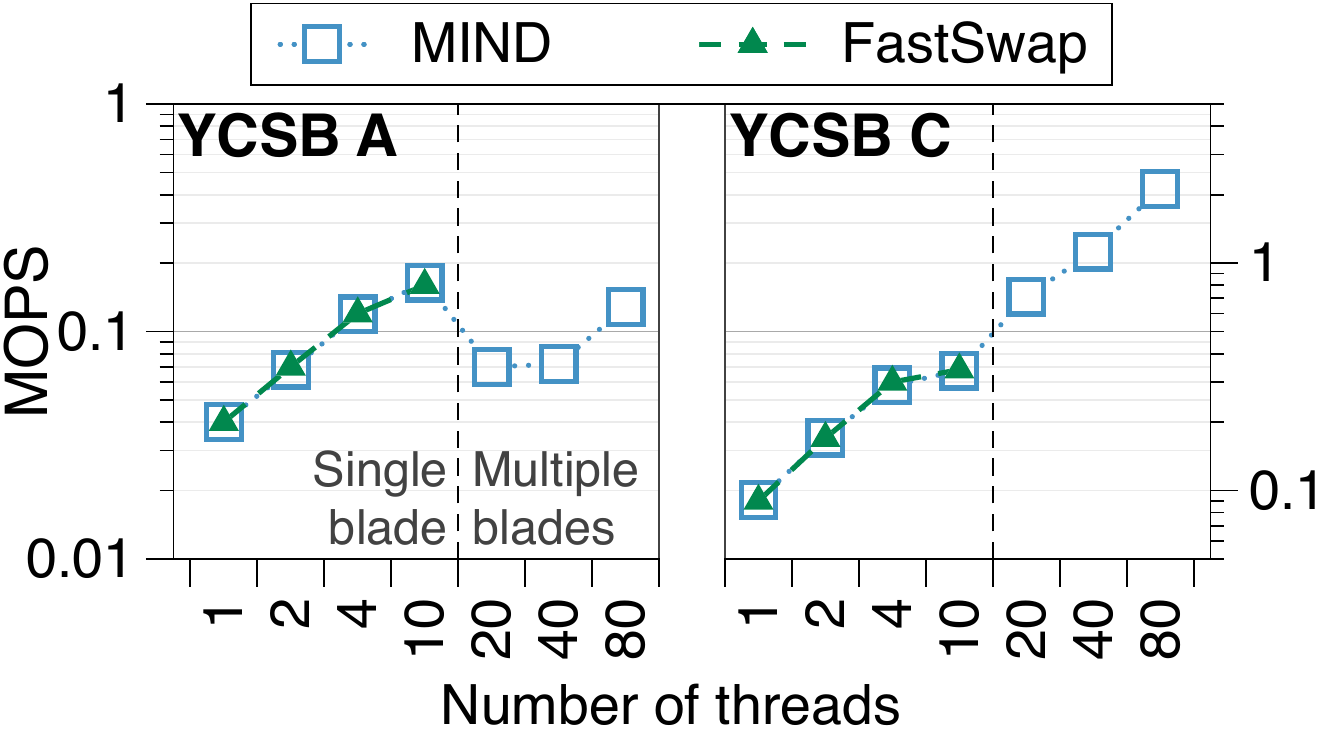}
  \caption{\textbf{Performance scaling} (left) on a single compute blade, (center) across compute blades, and (right) for Native-KVS. For (center), each blade runs 10 threads. Performance is normalized by \name's performance at 1 thread for (left) and 1 blade for (center); the runtimes in seconds for TF, GC, M$_A$ and M$_C$ workloads are 62.8, 59.3, 301.4 and 268.2 for 1 thread, and 69.2, 62.8, 302.3 and 306.9 for 1 blade, respectively.}
\label{fig:perf}
\label{fig:perf_intra}
\label{fig:perf_inter}
\label{fig:perf_kvs}
\end{figure*}

\begin{figure}[h!]
  \centering
  \includegraphics[width=0.45\textwidth]{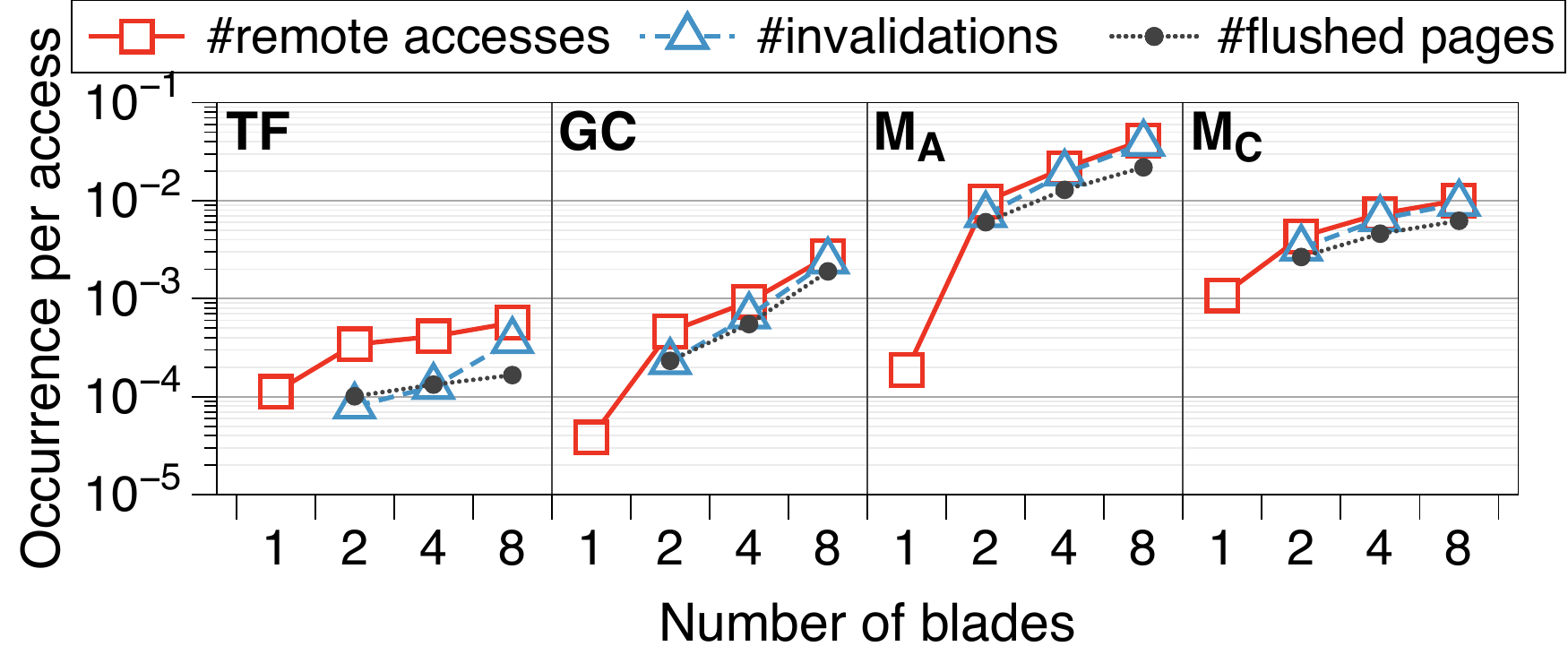}
  \caption{\textbf{Invalidation overhead.}}
\label{fig:perf_intra_inv}
\end{figure}

We evaluate \name to answer the following questions:
\begin{itemize}[topsep=2pt, partopsep=0pt, itemsep=0pt, leftmargin=*]
  \item Can \name enable transparent elasticity over performant disaggregated memory for real-world workloads (\S\ref{subsec:macro_bench})?
  \item What are \name's performance/resource bottlenecks (\S\ref{ssec:bottlenecks})?
  \item How does \algo perform in isolation (\S\ref{ssec:sensitivity})?
\end{itemize}
\paragraphb{Compared systems} We compare \name against two extreme designs in disaggregated memory (\S\ref{ssec:challenges}): a \textit{transparent} DSM-based approach with cache-coherence that supports compute elasticity, and a \textit{non-transparent} approach that limits compute elasticity to a single compute blade. For the former, we adapt GAM~\cite{gam}, a software-based DSM, to our disaggregated setting where cache directory is implemented at the compute blades. For the latter, we employ FastSwap~\cite{fastswap}, a state-of-the-art swap-based disaggregated memory system. All systems employ RDMA for efficient remote memory accesses. Finally, \name uses an initial region size of $16$~kB and epoch size of $100$~ms for its \algo algorithm.

\paragraphb{Cluster setup} We used a cluster comprising five servers connected via the programmable switch described in \S\ref{ssec:switchimpl}. We used a single server equipped with two 18-core Intel Xeon processors, $384$~GB of memory and four Nvidia/Mellanox CX-5 $100$~Gbps NICs, to host multiple memory blade VMs. To highlight the overhead and scalability of in-network cache coherence protocol, we used the remaining four machines, each equipped two 12-core Intel Xeon processors and two Nvidia/Mellanox CX-5 $100$~Gbps NICs, to host two compute blade VMs on each server. Similar to prior work~\cite{legoos}, we emulate the partial disaggregation model by limiting the local DRAM usage at each compute blade to $512$~MB, which is about $25\%$ of the memory footprint for our evaluated workloads. Note that each compute and memory blade VM in our setup had dedicated access to a separate $100$~Gbps NIC to ensure they represent separate network attached entities.

\paragraphb{Applications and workloads} We use several real-world workloads in our evaluation: TensorFlow~\cite{tensorflow} with ResNet-50~\cite{resnet} on CIFAR-10~\cite{cifar10} (denoted as TF), GraphChi~\cite{graphchi} with PageRank~\cite{pagerank} on Twitter graph~\cite{twitter_graph} (denoted as GC), and Memcached~\cite{memcached} with YCSB~\cite{ycsb_workload} workload A ($50\%$ read, $50\%$ write split, denoted as M$_A$) and workload C ($100\%$ reads, denoted as M$_C$). Since GAM is a \textit{software} DSM system, it requires applications to use a specialized memory API, while \name and FastSwap are transparent to applications. To ensure consistent comparison under different interfaces, we captured the memory accesses from our workloads using Intel's PIN~\cite{intel_pin}, and used a memory access emulator to generate the exactly same memory accesses across all three systems. In addition, we also present results for native execution of a simple key-value store (denoted as Native-KVS) on \name and FastSwap, since they support a transparent memory interface.

\begin{figure*}[ht!]
  \centering
  \includegraphics[width=0.305\linewidth]{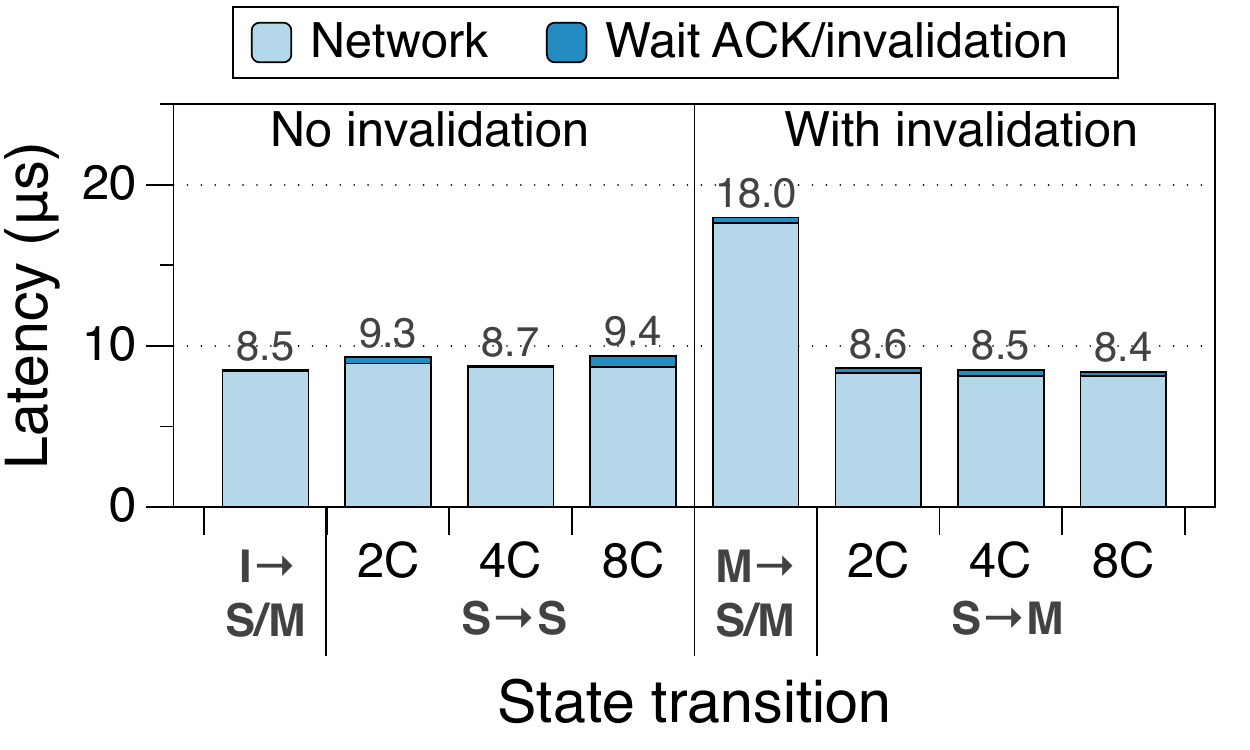}
  \includegraphics[width=0.214\linewidth]{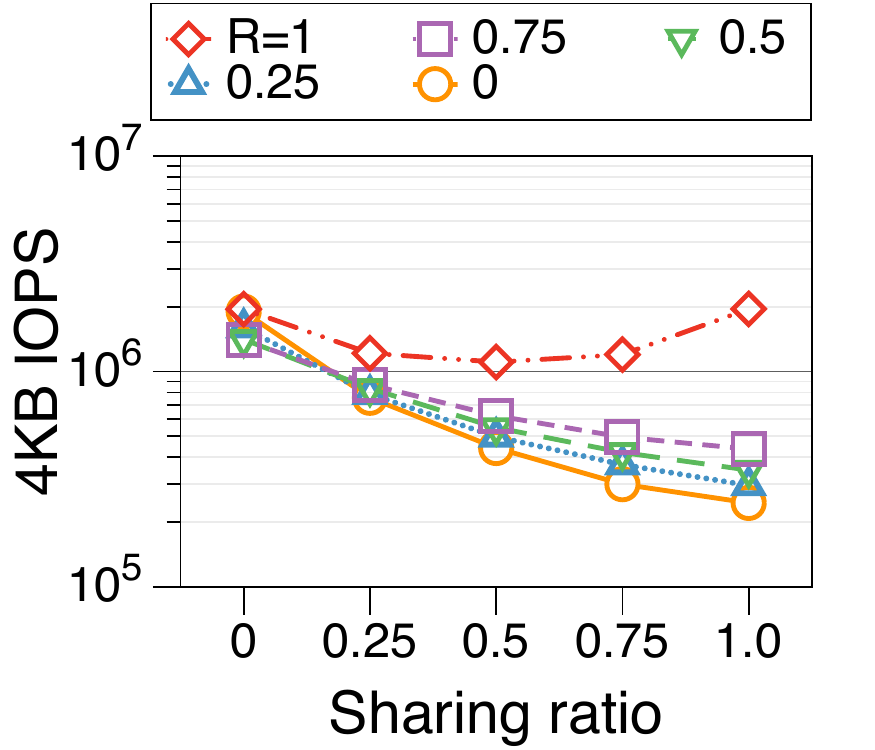}
  \includegraphics[width=0.427\linewidth]{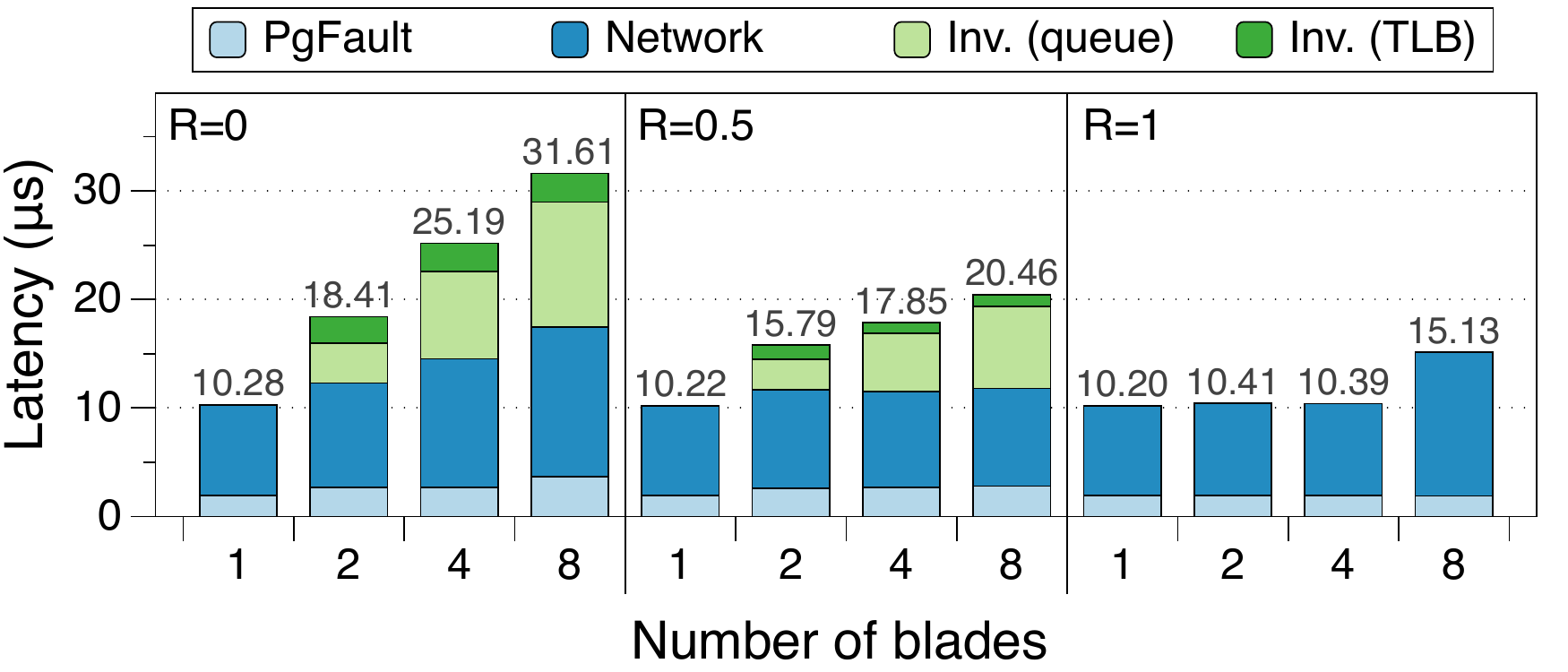}
  \caption{\textbf{Performance bottlenecks.} (left) Network latency for state transitions, (center) memory throughput vs. read-write/sharing ratios, \\(right) network latency break-down}
  \label{fig:micro_latency}
  \label{fig:micro_band}
  \label{fig:micro_latency_load}
  \label{fig:perfbottlenecks}
\end{figure*}

\subsection{Performance Scaling for Real-World Workloads}\label{subsec:macro_bench}

We start by evaluating \name's performance scalability.

\paragraphb{Intra-blade scaling} Figure~\ref{fig:perf_intra}~(left) shows performance scaling across all systems as the number of threads is increased on a single compute blade. We report performance as the inverse of runtime, normalized by the performance of \name for 1 thread. \name and FastSwap scale almost linearly with more compute blades because of their efficient page-fault driven remote memory accesses. In contrast, GAM scales linearly only up to 4 threads, and sub-linearly after that due to software overheads from its user-level library. For instance, GAM must check access permissions for every memory access by acquiring a lock, while \name and FastSwap can leverage the hardware MMU to facilitate the same. Such overheads become significant as the compute resources on a single compute blade come under contention at $10$ threads running on a $12$-core node.

\paragraphb{Inter-blade scaling} Next, we evaluate inter-blade scalability by running $10$ execution threads per-blade, for up to $8$ compute blades. Figure~\ref{fig:perf_inter}~(center) shows our results; here, while \name's default memory consistency model is strict (TSO, \S\ref{sec:impl}), \name-PSO denotes the \textit{simulated} performance of \name with the weaker PSO model (same as GAM). \name-PSO$+$ additionally simulates the effect of infinite switch capacity for directory storage. Since we are forced to simulate \name-PSO and \name-PSO$+$ using traces collected on a real TSO-based system, the traces retain additional TSO-associated queuing delays that we cannot elide, \ie, while our simulations can reorder writes and non-conflicting reads, queuing delays remain; in other words, our \name-PSO and \name-PSO$+$ results are underestimates to potential performance of a hardware-based solution. We omit FastSwap as it does not transparently scale beyond a single compute blade, similar to other disaggregation proposals~\cite{legoos, infiniswap}. Finally, to better understand \name performance, we also measured the number of remote accesses, invalidation requests, and flushed pages (\ie, pages pushed back to the memory blade due to the invalidation) in \name as a fraction of the total number of memory accesses (Figure~\ref{fig:perf_intra_inv}).

For a machine learning workload (TF), \name's performance scales well despite its stricter memory consistency model compared to GAM --- doubling the number of compute blades improves \name's performance by ${\sim} 1.67\times$, with a $59 \times$ speeded compared to GAM at $8$ compute blades. For GC, \name's performance increases from $1$ to $2$ compute blades, but starts to decrease beyond that. This is because GC's graph traversals incur random and often contentious access to shared data compared to machine learning workloads in TF. GC writes ${\sim} 2.5\times$ more data in shared pages than TF, generating significantly more state transitions to modified (\textbf{M}) state, and incurring frequent invalidations (\S\ref{ssec:bottlenecks}). Figure~\ref{fig:perf_intra_inv} additionally shows that the growth in the number of invalidations and flushed pages in GC is much steeper than TF, attributing to the reduced performance at higher parallelism. PSO partly alleviates this overhead by permitting writes to be performed asynchronously, but still does not permit linear scaling beyond $2$ compute blades. Instead, GAM scales better because the performance differential between its local and remote accesses is small --- local accesses are $10\times$ slower than that of MIND (due to software implementation of local accesses), while remote access latencies are similar for both. Consequently, performing more remote accesses (during invalidations) does not impact GAM performance as much as it does for \name.

Finally, M$_A$ and M$_C$ have more sharers with much larger number of shared writes compared to TF and GC. As a result, \name does not scale well beyond $1$ compute blade because: (1) more blades contend for acquiring write permission to the same region incurring multiple invalidations and significantly smaller number of local memory accesses, and (2) the directory storage at the switch becomes a bottleneck (as we show in \S\ref{ssec:bottlenecks}), frequently resulting in false invalidations for heavily shared memory regions. As shown in Figure~\ref{fig:perf_intra_inv}, M$_A$ and M$_C$ trigger significantly more (over $10\times$) invalidations and page flushes than either GC or TF workloads. We also confirm these insights through \name-PSO and \name-PSO$+$ simulated results, which show that employing weaker memory consistency models and infinite directory capacity improves \name's performance to some extent. Note that for M$_C$, \name's performance increases from $4$ to $8$ blades since the number of invalidations do not increase by much. GAM scales better due to its weaker consistency model, and by leveraging its software implementation to facilitate several memory access reorderings which are not possible in \name. Consequently, at $8$ compute blades, GAM and \name-PSO$+$ achieve roughly similar performance.

\paragraphb{Native KVS} Figure~\ref{fig:perf_kvs}~(right) shows the intra- and inter-blade scaling of Native-KVS on \name and FastSwap for YCSB-A and C workloads. On a single blade, both \name and FastSwap observe near linear performance scaling for up to $10$ threads. Since FastSwap does not support sharing state across multiple compute blades, we do not report its performance beyond $10$ threads. Similar to our results for M$_A$, \name does not scale well beyond a single compute blade for the YCSB-A workload ($50\%$ reads, $50\%$ writes) due to high read-write contentions. For the YCSB-C workload, Native-KVS scales linearly even beyond a single blade since it is a read-only workload, incurring no invalidations. Interestingly, YCSB-A workload on Native-KVS scales better than M$_A$ --- we attribute this to better partitioning of KVS state across compute blades in Native-KVS compared to Memcached.

\subsection{\name Performance and Resource Bottlenecks}
\label{ssec:bottlenecks}

We study \name bottlenecks in terms of (i) memory access performance, and (ii) memory resources at the switch.

\begin{figure*}[ht!]
  \centering
  \includegraphics[width=0.3\textwidth]{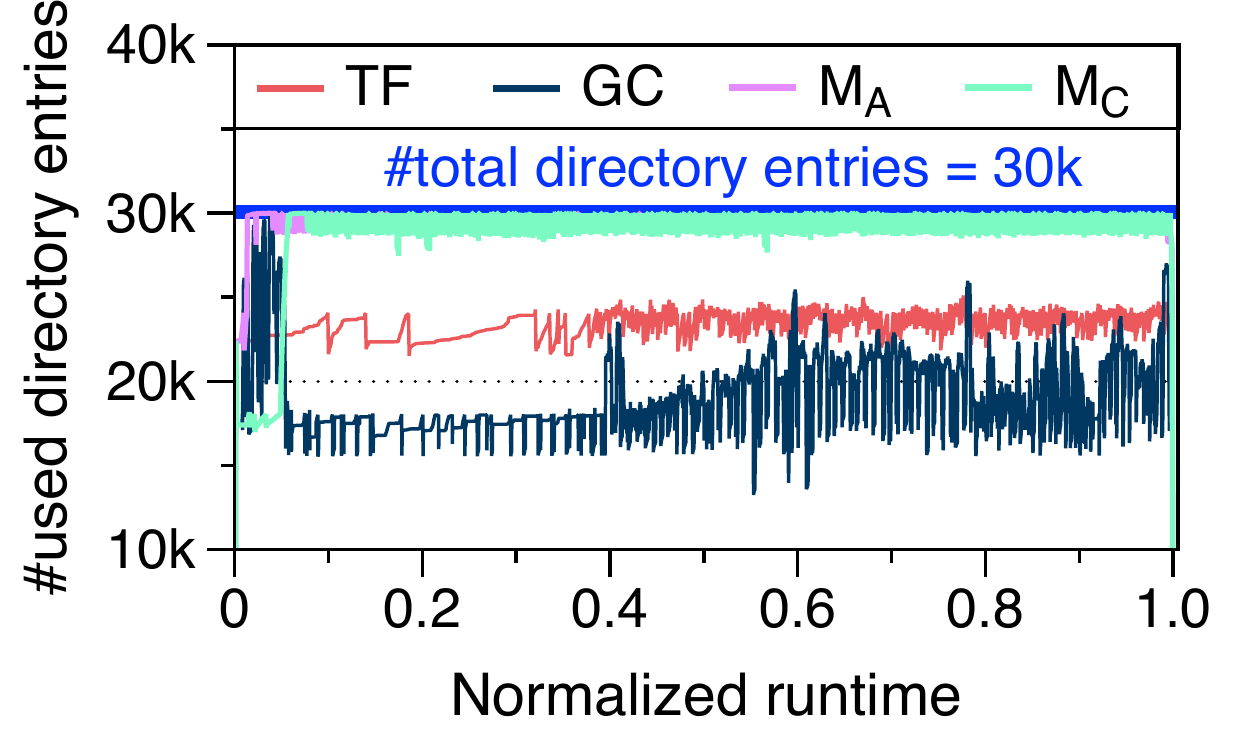}\hspace{1em}
  \includegraphics[width=0.31\textwidth]{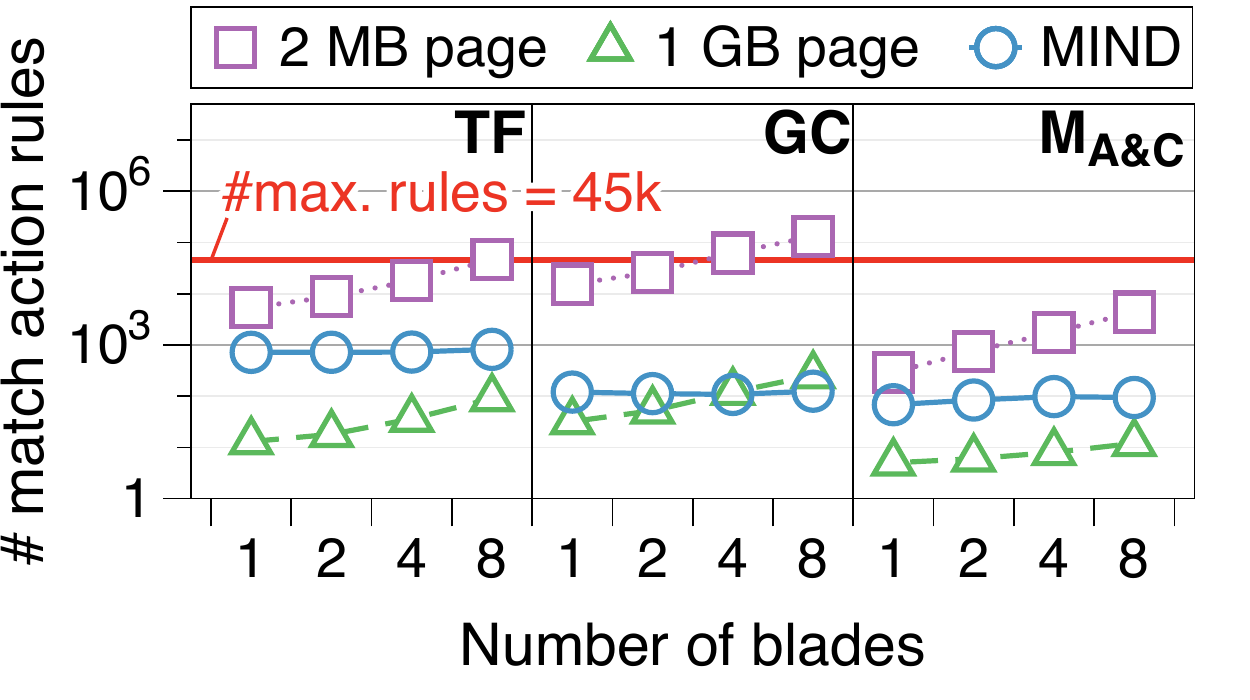}\hspace{1em}
  \includegraphics[width=0.31\textwidth]{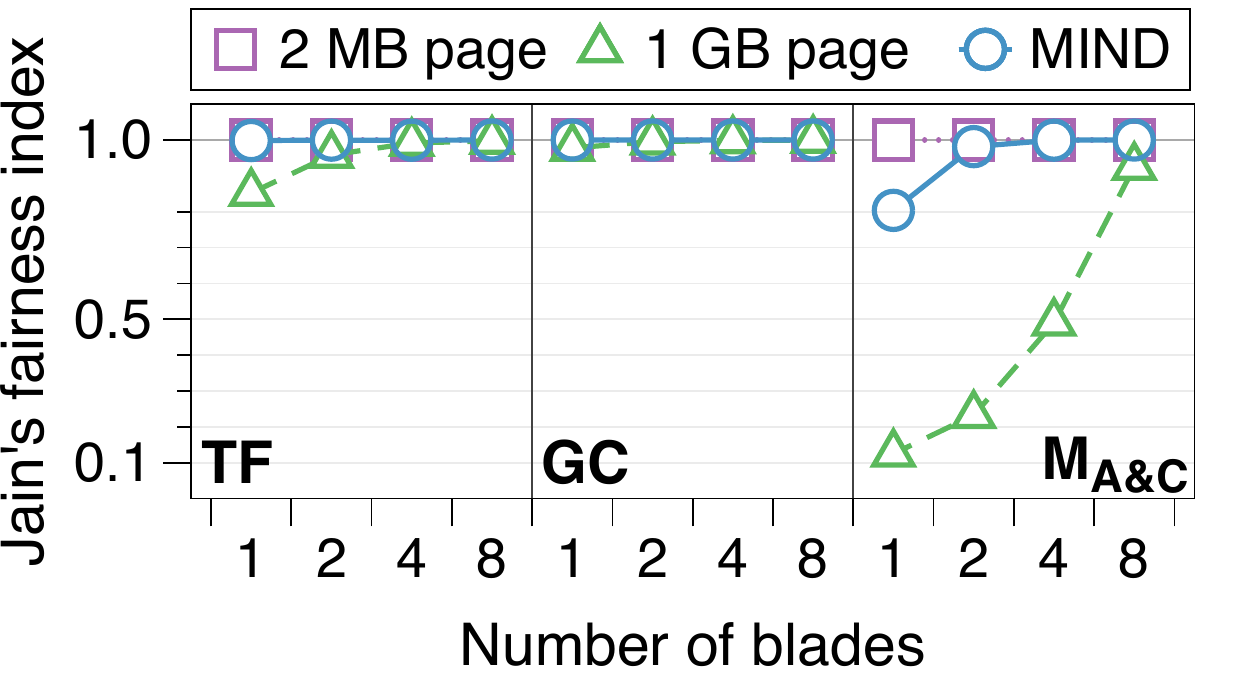}
  \caption{\textbf{\name switch resource bottlenecks.} (left) Directory entries, (center) match-action entries for heap, (right) load balancing for heap.}
  \label{fig:resbottlenecks}
  \label{fig:micro_load_balance}
  \label{fig:micro_num_entry}
  \label{fig:macro_cache_dir}
\end{figure*}

\begin{figure*}[t!]
  \centering
  \includegraphics[width=0.49\textwidth]{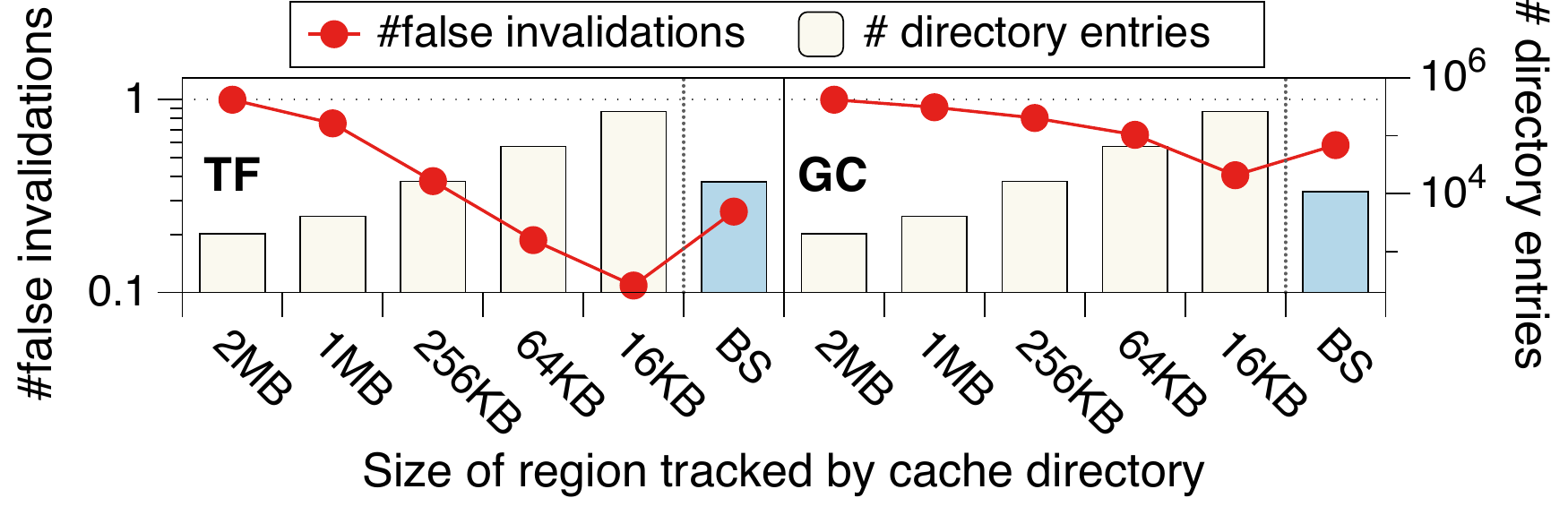}\hspace{2em}
  \includegraphics[width=0.46\textwidth]{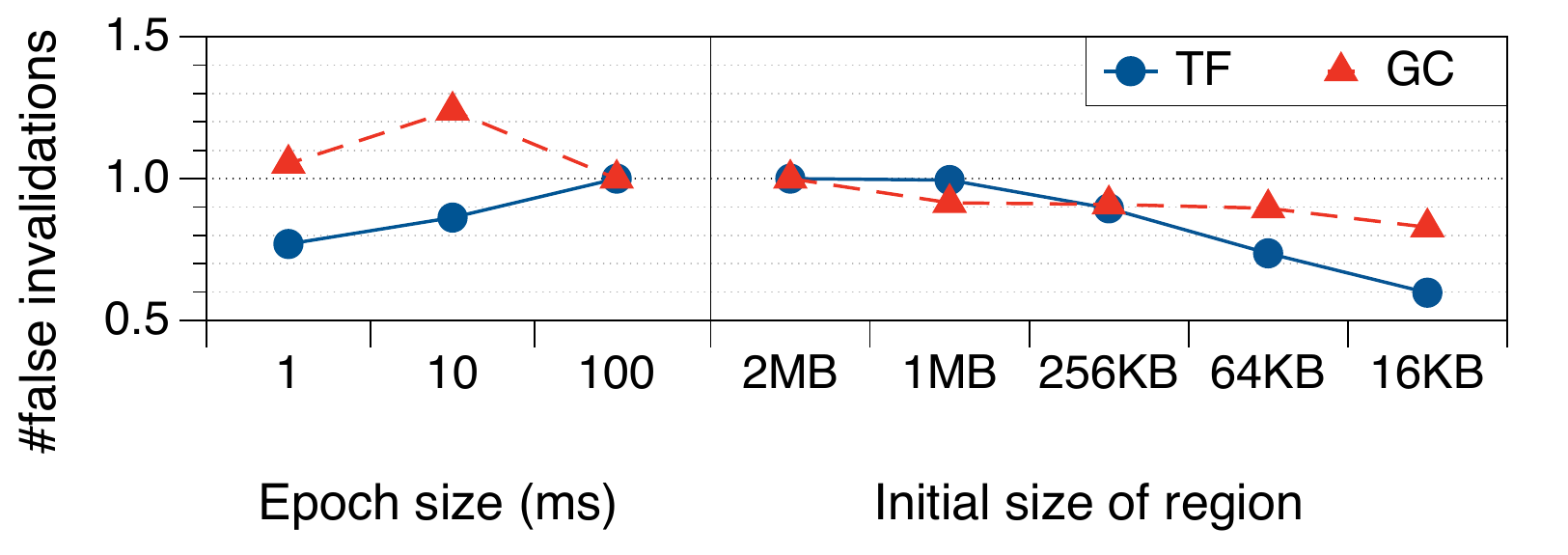}
  \caption{\textbf{Evaluating \name's \algo algorithm.} (left) Navigating switch storage vs. performance tradeoff. (right) Impact of epoch \& initial region sizing. The number of false invalidations is normalized by value at $2$~MB for region size and $100$~ms for epoch size.}
  \label{fig:storagevperf}
  \label{fig:epoch_size}
  \label{fig:region_size}
\end{figure*}%

\paragraphb{Latency for cache state transitions} Figure~\ref{fig:micro_latency} (left) shows the end-to-end latency due to every possible state transition under the MSI protocol in \name, including the time required to fetch the data. Note that this figure only shows latency for remote accesses --- local accesses only incur DRAM latency ($<100$~ns). On the x-axis, $2$ -- $8$C indicate the number of compute blades requesting the same page, and $x \rightarrow y$ denotes the state transition, $x$ and $y$ being the initial and final states. 

When a blade requests read-only (shared, \textbf{S}) mode for a region, and its initial state was either invalid (\textbf{I}) or shared (\textbf{S}), it does not require any invalidations. Consequently, the data fetch can be performed in a single RDMA request (${\sim} 9~\mu$s), as seen in the first four bars. If the transition for a region is either from or to the modified (\textbf{M}) state, the requesting blade must wait until the regions is invalidated at all its previous owners. When transitioning from \textbf{S} to \textbf{M}, the data can be fetched directly from the memory blade via one-sided RDMA operation, while the invalidation at other blades occur in parallel, resulting in a total latency of ${\sim} 9~\mu$s. When the region is initially in \textbf{M} state, the (dirty) data must be fetched from and the region invalidated at the same blade --- its current owner. Therefore, the invalidation and data fetch occur sequentially, resulting in ${\sim} 18~\mu$s latency. Note that since the latency for requests with invalidations is $2\times$ higher than requests without them, a workload's performance depends on the relative proportion of the different types of requests, as we show next.

\paragraphb{Impact of invalidations on memory throughput} Figure~\ref{fig:micro_band}~(center) shows \name memory throughput across 8 compute blades, running 1 compute thread each, under various read-write and sharing characteristics. We use read ratio to denote the fraction of reads in the workload (remaining accesses are writes), and sharing ratio to denote the portion of memory accesses that occur to a shared region (shared by all threads). We used a total working set size of $400~k$ pages, with the access pattern across them being uniform random. If most accesses are reads, then compute blades can share the same region without triggering invalidations (\textbf{S}$\rightarrow$\textbf{S} in Figure~\ref{fig:micro_latency} (left)). As such, at read-ratio $1$, most of the pages are accessed locally from the cache, resulting in very high memory throughput ($1$-$2\times 10^{6}$ IOPS) for all sharing ratios. Again, at sharing ratio $0$, memory throughput remains high, since accessed pages can remain cached at the compute blade without being invalidated, \ie, most accesses are local. If both the write proportion and sharing ratio are increased, memory throughput drops (by ${\sim} 10\times$ at sharing-ratio $1$), since they trigger a large number number of \textbf{M}$\rightarrow$\textbf{S}, \textbf{S}$\rightarrow$\textbf{M} transitions with invalidations and permit few pages to be accessed locally.

\paragraphb{Impact of invalidations on memory latency} Figure~\ref{fig:micro_latency_load} (right) shows the end-to-end latency for the same setup as Figure~\ref{fig:micro_band}~(center). The sharing ratio is fixed at $1$ (\ie, all pages are shared), while the read-ratio and number of compute blades are varied. The read-only workload (at read-ratio $\texttt{R}=1$) observes end-to-end latency close to \textbf{S}$\rightarrow$\textbf{S} transitions in Figure~\ref{fig:micro_latency}~(left) regardless of number of compute blades. However, lower read-ratios ($0.5$ and $0$) experience higher end-to-end latency compared to \textbf{M}/\textbf{S}$\rightarrow$\textbf{M} transitions in Figure~\ref{fig:micro_latency}~(left), due to two additional overhead sources: TLB shootdowns and queueing delays. First, each invalidation request may cause a page table entry (PTE) at the compute blade to be unmapped or experience a permission change (\eg, writeable to read-only), causing a corresponding TLB shootdown. These shootdowns must be performed synchronously for correctness, and can incur several microseconds of latency (Inv. (TLB)), similar to observations in prior work~\cite{latr}. Second, the increased number of invalidations at larger number of compute blades and lower read-ratios additionally leads to longer queueing delay (Inv. (queue)) --- time an invalidation request has to wait to be processed in a compute blade.

\paragraphb{Cache directory storage} Figure~\ref{fig:macro_cache_dir}~(left) shows the number of cache directory entries stored in the switch data plane in \name over time for the workloads evaluated in \S\ref{subsec:macro_bench} across 8 compute blades, running 10 threads each. In \name, we fix the total amount of storage allocated to directory storage to $30$~k entries. For the TF and GC workloads, \name's \algo algorithm ensures that the number of directory entries remains well below the limit over time. However, the MC$_{A}$ and MC$_{B}$ workloads have a significantly larger number of shared memory regions, with frequent read and write accesses to them; as a consequence, the number of directory entries for the workloads always remains close to the $30$~k limit. Recall from \S\ref{subsec:macro_bench} that one of the key reasons for poor scalability of these workloads is the number of false invalidations triggered due to the relatively coarse granularity of tracking directory entries --- we believe with future switch ASICs likely to be equipped with more TCAM/SRAM, this bottleneck would no longer exist, permitting more efficient scaling under \name.

\paragraphb{Address translation \& memory protection storage} We study the switch storage overheads due to address translation and memory protection on a setup with 8 memory blades, running the TF, GC, and M$_{A/C}$ workloads; we group $M_A$ and $M_C$ since they have the same memory allocations. Figure~\ref{fig:micro_num_entry}~(center) shows that the number of match action rules due to address translation and memory protection in \name is almost constant, even as the workload size increases. This is due to \name's per-memory blade partitioning of the address space, and \code{vma} granularity tracking of memory protection entries. While we have only shown results for three different applications, we find that the number of \code{vma} entries for typical datacenter applications falls in similar ranges, and well under $1$--$2$~k~\cite{vma1, vma2}. In contrast, the number of match-action rules increases linearly with the dataset size for page-based approaches, despite smaller absolute overheads with $1$~GB huge pages. Note that the upper-limit for match-action rules that the switch can store is about $45$~k --- higher than the $30$~k limit for directory entries due to a more compact representation.

\name's memory allocation also ensures balanced placement of load across memory blades (\S\ref{subsec:addr_trans}), as shown via Jain's fairness index metric~\cite{jain} in Figure~\ref{fig:micro_load_balance}~(right). While $2$~MB pages can achieve similar load-balancing, they do so at the cost of much larger number of address translation entries. $1$~GB pages, on the other hand, observes poor load balancing for allocation-intensive workloads (M$_{A/C}$).

\subsection{Evaluating \name's \Algo Algorithm}
\label{ssec:sensitivity}

We now evaluate \name's \algo algorithm.

\paragraphb{Storage vs. performance tradeoff} Recall from \S\ref{ssec:caching} that the granularity at which the directory tracks memory regions exposes a tradeoff between the false invalidation count and the size of the directory itself --- Figure~\ref{fig:storagevperf}~(left) highlights this tradeoff for the TF and GC workloads. Specifically, tracking smaller regions (\eg, $16$~kB) permits fewer false invalidations, but at the cost of larger number of directory entries at the switch, while tracking larger regions (\eg, $2$~MB) exposes the opposite tradeoff. \name's \algo algorithm employs adaptive region sizing to balance both the number of directory entries as well false invalidations.

\paragraphb{Impact of epoch and initial region sizing} Figure~\ref{fig:epoch_size}~(right) shows the impact of epoch size on the total number of false invalidations for TF and GC workloads. Increasing the epoch size from $1$ to $100$~ms does not have a significant impact on the number of false invalidations, but reduces the control plane overheads. Epoch size smaller than $1$ms (not shown) are unable to capture enough invalidations to enable accurate estimation of the distribution, resulting spurious merges/splits and unpredictable false invalidations. We use $100$~ms as our default epoch size since it offers a sweet spot for minimizing both false invalidations and control plane overheads.

Figure~\ref{fig:region_size}~(right) shows that picking smaller initial region sizes results in fewer false invalidations --- intuitively, this is because larger initial region sizes require several splits before stabilizing to the appropriate region size, incurring several false invalidations in the interim. We select $16$KB as our default initial region size, since smaller region sizes result in too many directory entries during initialization. 

Finally, we note that neither parameter has any noticeable impact on the number of directory entries at stable state.

\section{Limitations and Future Research}
\label{sec:discussion}

We now discuss the limitations of current \name implementation, and future research directions to resolve them. 

\paragraphb{Thread management} Even with our optimizations, remote memory access latency is still at least two orders of magnitude higher than local latency. While our work explores in-network approaches to minimize overheads of coherence, an orthogonal approach of co-locating threads with higher proportion of shared memory accesses could yield significant improvements in end-to-end application performance by reducing the number of invalidations over the network.

\paragraphb{Other coherence protocols} While \name implements the simple MSI coherence protocol, more complex protocols like MOESI may offer better scalability by reducing broadcasts and write-backs to disaggregated memory. Realizing such protocols would require storing larger state transition tables (STT) at the switch and handling more transient states, adding implementation complexity for ensuring correctness. Still, the number of TCAM entries required for STT entries would be quite small (\eg, tens of states for MOESI) relative to switch ASIC capacities, making them realizable today.

\paragraphb{Weaker consistency models} As we noted in \S\ref{sec:impl} and \S\ref{sec:evaluation}, our page-fault based implementation on x86 architectures cannot realize weaker consistency models like PSO. To this end, a redesign of the compute blade architecture --- \eg, by enabling page-faults on reads (but not writes) to a page --- could enable realization of weaker consistency models in \name, facilitating higher throughput to disaggregated memory.

\paragraphb{Scaling beyond a rack} While \name targets a rack-scale design with a single switch, some workloads may want to scale transparently beyond a single rack. This requires a shift similar to the shift from single node CPUs (akin to the rack in our setting) to multi-node NUMA architectures (akin to datacenter-scale memory disaggregation). Such a design would require extension of \name design from a single switch to a datacenter-wide network topology.

\paragraphb{Virtualization} While \name enables protection at a virtual memory level, extensions to virtualization are needed to facilitate true isolation across users for security, resource management, legacy OS support, \etc~Providing performance isolation, in particular, would require isolating several different shared resources along the compute-memory interconnect, including network bandwidth, switch and NIC resources.

\section{Related Work}
\label{sec:related}

While we discussed prior disaggregated memory approaches in \S\ref{ssec:challenges}, we now discuss other work related to \name. 

\paragraphb{In-network computing} There have been several recent efforts that leverage in-network computing for performance gains~\cite{innetwork, pktsched, congestion1, congestion2, lb1, lb2, netpaxos, p4xos, nopaxos, synchrony, concurrency1, concurrency2, netlock, agg1, agg2, agg3, qp1, qp2}. Most focus was on offloading \textit{application} logic and state to the network, \eg, key-value caches~\cite{netcache, incbricks} and metadata~\cite{pegasus}. Perhaps the most relevant to \name are NOPaxos~\cite{nopaxos} and Concordia~\cite{concordia}. NOPaxos leverages the network to order requests for Paxos-based consensus, enabling consistent replication without expensive coordination overheads. While \name targets a complementary goal of in-network memory management, it could leverage NOPaxos to enable consistent replication of disaggregated memory. Concordia, on the other hand, uses the programmable switch as a cache for directory entries in a DSM; in contrast, \name realizes memory management completely in the network.

\paragraphb{Application-driven memory disaggregation} Recent work argues for OSes to expose resource management abstractions like memory placement and failures to the applications for high performance and better fault-tolerance~\cite{disaggapp, disaggfault}. While \name argues for a transparent disaggregated shared memory, it is not incompatible with the above approaches --- OS-level libraries layered atop \name could still expose memory placement and failure notifications to applications. 

Clover~\cite{clover} explores the design of a key-value store closely integrated with disaggregated persistent memory. In contrast to \namex's in-network transparent memory management, Clover focuses on lock-free consistent access to KV pairs stored in network-attached memory using atomic RDMA verbs. While a possible design considered in~\cite{clover} places key-value coordination and access logic at a centralized coordinator, it does not place the logic in the network fabric and does not consider memory protection, caching or coherence.

\paragraphb{Emerging industry standards} While most industry standards for high performance compute-memory interconnects like CCIX~\cite{ccix}, CXL~\cite{cxl} and OpenCAPI~\cite{opencapi} target \textit{intra}-server settings, Gen-Z~\cite{genz} is perhaps the closest to \name since it targets \textit{inter}-server fabrics. The Gen-Z standard defines operations like \code{ExclusiveRead} and \code{Writeback} that may be used as building blocks for software-based coherence~\cite{genz, genz1}, although we are unaware of any publicly available realization. Moreover, the MMU functionalities in all the above industry standards are realized at the \textit{endpoints}, \eg, at specialized ZMMUs at CPU and memory nodes in Gen-Z; the \textit{fabric} (\eg, the switch) only forwards memory requests and responses. This is in contrast to \name's approach of in-network memory management, \ie, \name's design is complementary to the industry efforts towards high-performance interconnects.

\section{Conclusion}

We have presented \name, an in-network memory management unit for rack-scale memory disaggregation. \name achieves resource elasticity, performance and transparency through a principled redesign of traditional memory management mechanisms to achieve their individual goals in the disaggregated setting while operating under programmable switch ASIC resource constraints. Our \name prototype facilitates transparent resource elasticity, while matching the performance of prior memory disaggregation proposals for real-world workloads.
\section*{Acknowledgements}
We would like to thank our shepherd Yiying Zhang and anonymous SOSP reviewers for their valuable comments and insightful feedback. We are also grateful to Daehyeok Kim, Victor Gomez and Jonathan Kraft for inputs at various stages of the work. This work is supported in part by NSF Awards \code{2047220}, \code{2016422} and \code{1916817} and their REU supplements.

{
\bibliographystyle{unsrt}
\bibliography{bib/abr-short,bib/paper_ref}
}
\end{document}